\newcommand{\kolo}[1]{\!\stackrel{\circ}{#1}\!  \vphantom{#1}}
\newcommand{\mnabla}{\kolo{\nabla}}
\newcommand{\mGamma}{\kolo{\Gamma}}
\newcommand{\dd}{\text{d}}
\newcommand{\Lag}{\mathcal{L}}
\newtheorem{theorem}{Theorem}[section]
\newtheorem{lemma}[theorem]{Lemma}
\title{How the~non-metricity of the~connection\\ arises naturally in the~classical theory of gravity}
\author{Bart{\l}omiej B\k{a}k\footnote{Email: Bartlomiej.Bak@fuw.edu.pl}\\
Faculty of Physics, University of Warsaw,\\ Department of Mathematical Methods in Physics, Poland \\ and \\ Jerzy Kijowski\footnote{Email: kijowski@cft.edu.pl}\\
Center for Theoretical Physics\\
Polish Academy of Sciences, Warsaw, Poland}
\date{\today}
\begin{document}

\maketitle


\begin{abstract}
Spacetime geometry is described by two -- {\em a priori} independent -- geometric structures: the symmetric connection $\Gamma$ and the metric tensor $g$. Metricity condition of $\Gamma$ (i.e. $\nabla g = 0$) is implied by the Palatini variational principle, but only when the matter fields belong to an exceptional class. In case of a generic matter field, Palatini implies non-metricity of $\Gamma$. Traditionally, instead of the (1st order) Palatini principle, we use in this case the (2nd order) Hilbert principle, assuming metricity condition {\em a priori}. Unfortunately, the resulting right-hand side of the Einstein equations does not coincide with the matter energy-momentum tensor. We propose to treat seriously the Palatini-implied non-metric connection. The conventional Einstein's theory, rewritten in terms of this object, acquires a much simpler and universal structure. This approach opens a room for the description of the large scale effects in General Relativity (dark matter?, dark energy?), without resorting to purely phenomenological terms in the Lagrangian of gravitational field. All theories discussed in this paper belong to the standard General Relativity Theory, the only non-standard element being their (much simpler) mathematical formulation. As a mathematical bonus, we propose a new formalism in the calculus of variations, because in case of hyperbolic field theories the standard approach leads to nonsense conclusions.
\end{abstract}

\section{Introduction}\label{int}

Gravity is believed to describe the universal interaction between all possible kinds of matter. Mathematically, the above conjecture can be formulated as the following rule (often called the {\it minimal coupling} rule): for every matter field $\phi$, whose local dynamics in the flat Minkowski space, equipped with the flat Lorentzian metric $\eta_{\mu\nu}$, is described by the {\em matter} Lagrangian density:
\begin{equation}\label{Lmatt}
    {\cal L}_{matt} = {\cal L}_{matt} (\phi , \nabla \phi , \eta) \, ,
\end{equation}
its {\it global} dynamics in an arbitrary spacetime (equipped with a -- possibly non-flat -- metric $g_{\mu\nu}$), together with interaction between matter and geometry, is described by the following {\em total} (i.e. “matter + gravity”), metric Lagrangian density:
\begin{equation}\label{Ltot}
    {\cal L}_g := {\cal L}_H + {\cal L}_{matt} \, .
\end{equation}
Here, by
${\cal L}_H$ we denote the Hilbert Lagrangian density\footnote{In this paper we use a geometric system of physical units, where $c=1$ and $G=1$ are dimensionless numbers, see \cite{Gravitation}. To rewrite formula \eqref{LHilb} in an arbitrary system of units one has to multiply the dimensionless constant “$\pi \approx 3,14\dots$” appearing here, by the fundamental physical constant “$\frac G{c^4}$”.} (see \cite{hil}): \label{first}
\begin{equation}\label{LHilb}
    {\cal L}_H = {\cal L}_H (g, \partial g , \partial^2 g) :=
    \frac {\sqrt{|\det g|}}{16 \pi} \, \kolo R \, ,
\end{equation}
and $\stackrel{\circ}{R}$ denotes the scalar curvature of the metric $g$ or, more precisely, of the metric Levi-Civita connection
\begin{equation}\label{Gamma0}
    \stackrel{\circ \ }{\Gamma^{\kappa}}_{ \lambda\mu}:=
    \frac 12 \, g^{\kappa\sigma}  \left(
    g_{\sigma\lambda , \mu} + g_{\sigma\mu , \lambda} - g_{\lambda\mu , \sigma}
    \right) \, .
\end{equation}
The geometric structure of the Minkowski space in formula \eqref{Lmatt} (both the metric $\eta$ and its flat connection) has to be replaced by the actual, possibly non-flat, structure $(g,\stackrel{\circ \ }{\Gamma})$. Hence, ${\cal L}_{H}$ depends upon the metric components $g_{\mu\nu}$, together with their first and second derivatives, according to \eqref{LHilb}, whereas matter Lagrangian depends upon metric and its {\it first derivatives} only, contained in the Levi-Civita connection $\stackrel{\circ \ }{\Gamma}$:
\begin{equation}\label{ogo}
        {\cal L}_{matt} = {\cal L}_{matt} (\phi , \mnabla \phi , g)
        = {\cal L}_{matt} (\phi , \partial \phi , g, \partial g)\, .
\end{equation}

Physical intuitions concerning the fundamental conceptual structure of General Relativity Theory were built on specific examples: electrodynamics, scalar field and, especially, on mechanics of continuous media (see, e.g., \cite{kij-magli1997}, \cite{kij-magli1998}). In all these cases the matter Lagrangian \eqref{ogo} {\bf does not} depend upon connection (and, therefore, upon derivatives of the metric tensor), i.e. instead of \eqref{ogo}, we have:
\begin{equation}\label{Lmatt-specific}
        {\cal L}_{matt} = {\cal L}_{matt} (\phi , \partial \phi , g) \, .
\end{equation}
This, very special, property of the matter Lagrangian was {\em explicitly} assumed in most papers discussing the basic structures of the theory, like, e.g., the Palatini's outstanding paper \cite{palatini}. If the matter field belongs to this exceptional category, both matter and gravitational field equations (i.e.~Euler-Lagrange equations derived from the total Lagrangian density \eqref{Ltot}) can be rewritten as follows:
\begin{eqnarray}
  \partial_\lambda p^\lambda &=& \frac {\partial {\cal L}_{matt}}{\partial \phi} \, ,
  \label{E_L}\\
  {\rm where\ \ \ \ \  }
  p^\lambda &:=& \frac {\partial {\cal L}_{matt}}{\partial \phi_{ ,\lambda}}
  \ \ \ ; \ \ \ \phi_{ ,\lambda}:= \partial_\lambda \phi \, ,
  \label{mom_p}\\
  {\cal G}^{\mu\nu} &=& 8\pi \, {\cal T}^{\mu\nu} \, , \label{EE}\\
  {\rm where\ \ \ \ \  }{\cal T}^{\mu\nu} &:=&   2 \,  \frac {\partial {\cal L}_{matt}}{\partial g_{\mu\nu}} \, , \label{E_MOM}
\end{eqnarray}
with
\begin{equation}\label{calG}
    {\cal G}_{\mu\nu} = \sqrt{|\det g|} \, G_{\mu\nu} \,
\end{equation}
denoting the density of the Einstein tensor:
\begin{eqnarray}
     G_{\mu\nu} = R_{\mu\nu} - \frac 12\, g_{\mu\nu}\,  R \, .
\end{eqnarray}
Here, $R_{\mu\nu}$ denotes the Ricci tensor, whereas its trace $R = g^{\mu\nu} R_{\mu\nu}$ denotes the scalar curvature.

But what is most important in these exceptional examples, is the fact that the quantity \eqref{E_MOM}, defined in this way, can really be interpreted as the energy-momentum tensor of the matter field, because its “time-time-component” correctly describes the special-relativistic energy density of the matter field evolving over a non-dynamical (i.e. fixed {\it a priori}) geometric spacetime structure $(g,\stackrel{\circ \ }{\Gamma})$. This observation appears in the classical literature under the name of the so called Belinfante-Rosenfeld theorem (cf. \cite{Belinfante}, \cite{Belinfante2} and \cite{Rosenfeld}). Moreover, it is conserved (i.e. satisfies equation $\nabla_\mu {\cal T}^{\mu\nu} = 0$) due to solely matter field equations \eqref{E_L} -- \eqref{mom_p}, even if gravitational field equations \eqref{EE} are not satisfied. This property allowed Einstein to argue that the matter energy-momentum tensor ${\cal T}^{\mu\nu}$ acts as the “source of gravity”, just as the electric current $j^\lambda$ in electrodynamics can be considered  “source of electromagnetism” as soon as its conservation: $\partial_\lambda j^\lambda$ is fulfilled. This argument allowed him to decisively cross the line between “physics of flat space” and “physics of curved space”. But now, that the latter is already well established in the domain of relatively weak gravitational fields, the next frontier must be crossed: we would like to understand the domain of {\em very strong} fields, where the curvature of space-time far exceeds what we observe even in recent, daring space missions. In particular, we would like to understand the dark matter and dark energy phenomena that govern the large-scale structure of spacetime. For this purpose we must, probably, admit generic matter fields, not only those satisfying the very restrictive condition \eqref{Lmatt-specific} and, maybe, accept the generic non-metric connections. The latter issue was recently discussed thoroughly in paper~\cite{uni}.

Unfortunately, the simple scheme \eqref{E_L} -- \eqref{E_MOM} fails for a generic matter field (e.g. vectorial, spinorial or tensorial), when {\em covariant} (and not just {\em partial}) derivatives are necessary to define the {\em covariant} matter Lagrangian, i.e. when, instead of \eqref{Lmatt-specific}, we have:
\begin{eqnarray}
        {\cal L}_{matt} &=& {\cal L}_{matt} (\phi , \mnabla \phi , g) ={\cal L}_{matt} (\phi , \partial \phi , g,\,  \mGamma) = {\cal L}_{matt} (\phi , \partial \phi , g, \partial g) \, .\label{Lmatt-2wst}
\end{eqnarray}
In this case, variation of \eqref{Ltot} with respect to the metric $g_{\mu\nu}$ produces an extra term on the right-hand side of Einstein equation \eqref{E_MOM}.
Consequently, field equations \eqref{E_L}~--~\eqref{EE} of the theory remain unchanged, but the partial derivative \eqref{E_MOM} of the matter Lagrangian density, which provides the right-hand side of Einstein equation \eqref{EE}, must be replaced now by the so called “variational derivative”:
\begin{eqnarray}
  {\cal T}^{\mu\nu} &=&2\,\frac{\delta \Lag_{matt}}{\delta g_{\mu\nu}} =  2\,
  \left\{  \frac {\partial {\cal L}_{matt}}{\partial g_{\mu\nu}} -\partial_\kappa \frac {\partial {\cal L}_{matt}}{\partial g_{\mu\nu , \kappa}}
  \right\} \, , \label{senm}
\end{eqnarray}
where we use the following (“jet-oriented”) notation: $g_{\mu\nu , \kappa}:= \partial_\kappa g_{\mu\nu}$.
The second (extra) term contains not only the second order derivatives of the metric, but also second order derivatives of the matter field. This changes considerably the structure of Einstein equations, makes it extremely complicated and, above all, ruins the standard physical interpretation of the theory: “mass generates curvature”, which was proposed already by Riemann and Clifford (cf. \cite{clifford}). This is due to the fact that formula \eqref{senm} does not represent neither energy (mass) nor momentum of the matter field, because {\em Belinfante-Rosenfeld theorem} does not apply in this case and, consequently, \eqref{senm} differs from the energy-momentum tensor, defined properly as the Hamiltonian density generating evolution of the matter field in the Hamiltonian picture. Moreover, small perturbations of the metric (i.e. {\em gravitational waves}) propagate differently than the electromagnetic waves (i.e. they do not follow the light-cones of the metric). The simplest example of such a matter field, namely a vector field $\phi^\mu$, is discussed in Section \ref{Palatini}. Hence, the traditional, heuristic interpretation of Einstein equation (mass causes the spacetime curvature) fails in this case. To avoid this discrepancy, one should probably treat matter and gravity on a more equal footing, as already suggested strongly by Albert Einstein, who stressed many times in his papers (see, e.g. \cite{ein-autob}), that splitting physical reality into “matter” and “geometry” is, from the fundamental point of view, probably not justified.

A considerable simplification of the above conceptual puzzle is due to the use of the classical, well established “Palatini method of variation” (cf. \cite{palatini}). Here, variation of the total Lagrangian density \eqref{ogo} is performed with respect to the two, {\em a priori} independent, geometric structures: 1) the metric tensor $g$ and 2) the connection $\Gamma$:
\begin{equation}\label{afPA}
    {\cal L}_g := {\cal L}_H(g, \Gamma,\partial \Gamma) + {\cal L}_{matt}( \phi , \nabla \phi , g ,) \, .
\end{equation}
A specific sensitivity of the covariant derivatives $\nabla \phi$ towards the connection, written symbolically as
\begin{equation}\label{cov-r}
    \nabla  \phi = \partial  \phi  + `` \ \phi \ \cdot \ \Gamma \ "  \  ,
\end{equation}
is implied by geometric properties of the matter field $\phi$ (e.g.~vector, tensor or  other) and will be decisive for its gravitational properties.

The fundamental consequence of this approach is that, in a generic case of a Lagrangian density \eqref{afPA}, the resulting connection $\Gamma$ is (possibly) non-metric, i.e.~differs (possibly) from the Levi-Civita connection \eqref{Gamma0} in a way depending upon the above sensitivity of the matter field. In fact, the metricity condition for $\Gamma$ is obtained only for specific matter Lagrangians, namely those fulfilling \eqref{Lmatt-specific}. One can say shortly that, whereas matter causes the spacetime curvature, its specific sensitivity \eqref{cov-r} towards the connection causes the non-metricity of the connection. This {\bf is not} a new phenomenon, but is usually left unsaid by most authors.

\

In the present paper we show,  in a simple and natural way, how and why the non-metricity arises in a generic case of a matter Lagrangian \eqref{ogo}. For this purpose, we analyse the validity of the Palatini principle and prove that it leads, in a generic case, to the non-metric connection: the only exception are theories given by specific matter Lagrangian densities, namely those fulfilling \eqref{Lmatt-specific}, where the Palatini equation implies, indeed, the conventional metricity condition for the connection.

One might naively think that the non-metric connection theory would not be equivalent to Einstein's conventional theory, where the connection's metricity is assumed {\em a priori}. We show in this paper that such a naive conclusion is false. Indeed, when rewritten back, in terms of the metric connection, the non-metric theories discussed here assume their conventional, Einsteinian form, the only difference being how do we interpret various mathematical (computational) terms arising in field equations. We therefore emphasize that the theories discussed here do not belong to any of the “generalizations of General Relativity”, but to the standard Einstein theory of gravity!

The goal of this paper is to convince the reader that such a reformulation of a non-metric connection, back to the purely metric picture, is neither necessary nor useful. We show that the formulation of the conventional theory of gravity, allowing for (possibly) non-metric connection, implied by the consistent application of Palatini's “method of variations”, significantly simplifies the conceptual framework of this theory and its mathematical structure. Being geometrically natural and algebraically useful, the above non-metric connection deserves, maybe, more attention on the side of the physical interpretation of the theory as the “true geometric structure” of spacetime. Indeed, there are strong physical arguments (cf.~\cite{uni}) for such an extension of the geometric framework for the gravity theory, especially when trying to describe the large scale structure of the Universe. But in the present paper we consider only the conventional Einstein theory. Transition from its purely metric formulation to the metric-affine formulation (with a non-metric connection) is treated here as merely the mathematical “change of variables” which does not change the physical content of the theory.

A considerable simplification of the canonical
framework of the General Relativity Theory is due to the discovery of its purely affine variational formulation (see~\cite{affine}), where variation of the affine Lagrangian density
\begin{equation}\label{af}
    \Lag_A = \Lag_A(R_{\mu\nu} , \phi , \nabla \phi )
\end{equation}
is performed with respect to the connection variable only, whereas the metric tensor arises as the “momentum canonically conjugate” to the connection. Such a purely affine Lagrangian density depends upon connection coefficients $\Gamma^\lambda_{\ \mu\nu}$ (contained in the Ricci curvature tensor $R_{\mu\nu}$, but also in the matter covariant derivatives $\nabla \phi$), together with their first (and not the {\it second} !) derivatives contained only in the curvature tensor, (cf.~\cite{affine}, \cite{Tulcz}, \cite{kij-werp2007} and a recent review article \cite{APP}). We thoroughly analyse the relation between the affine formulation and the one due to Palatini.

\

The paper is organized as follows. In Section \ref{sec var} we show that the classical formalism in calculus of variations, based on “imposing spacetime-boundary-conditions”, contradicts the hyperbolic structure of field equation. This classical approach was developed for optimization purposes, described by elliptic equations. Applying it to dynamical (hyperbolic) problems is a pure nonsense. We show how to avoid this nonsense: the variational principle has to be treated as a “symplectic relation”. The approach presented here enables us to easily manipulate various variational principles and greatly simplifies the proof of their equivalence. In Section \ref{Palatini} we first analyse thoroughly the content of the simplified version of the “Palatini method of variations”, limiting ourselves to the case of “connection-non-sensitive” matter fields. This was the case considered by Palatini himself in his outstanding paper \cite{palatini}. Next, we show that -- in the case of a generic matter field -- it necessarily leads to a non-metric spacetime connection. We provide its complete formulation as a rigorously defined relation between the two geometric spacetime structures: the metric tensor and the affine connection. In Section \ref{univ-aff} we discuss the transition from the metric formulation to the affine formulation in terms of a simple Legendre transformation. In Section \ref{univ-P} we construct the universal Palatini formulation of the present General Relativity theory and in Section \ref{afin} we briefly discuss the inverse Legendre transformation: from affine to the metric picture. Finally, we discuss the physical consequences (Section \ref{concl}) of the mathematical results obtained here. In particular, we show which sector of the geometric structure of space-time could be responsible for the description of dark matter, if we try to generalize the current theory of gravity to the scale of the Universe. The most complex calculations have been shifted to Appendixes.

The reader is warned that, without loss of generality, the connection $\Gamma^\lambda_{\ \mu\nu}$ used in this paper is always symmetric. Indeed, a generic (non-symmetric) connection in the tangent bundle (a concept conceived for purposes of the Yang-Mills theory) is a {\em reducible} object. Due to the specific structure of the tangent bundle (the “solder form”) it splits into two irreducible parts: 1) a symmetric connection and 2) the torsion tensor. Being a tensor field, the latter can be always taken into account together as one of the matter fields. Therefore, without loss of generality, we limit ourselves to the case of a symmetric connection: $\Gamma^\lambda_{\ \mu\nu}= \Gamma^\lambda_{\ \nu\mu}$.

\section{Hyperbolic calculus of variations: common misconceptions and how to avoid them. Affine formulation of the theory of gravity}
\label{sec var}
Consider a Lagrangian density ${\cal L}= {\cal L}(\varphi^K , \partial_{\lambda}\varphi^K )$ depending upon a one-parameter-family of fields
\begin{eqnarray}
    \varphi^K = \varphi^K(x^\mu, \epsilon)\, ,
\end{eqnarray} 
where $(x^\mu)$ are spacetime coordinates. We denote by
\begin{eqnarray}
    \varphi^K_{,\lambda} := \partial_\lambda \varphi^K \, ,
\end{eqnarray} 
the field's spacetime derivatives.
Traditionally, derivative with respect to the parameter $\epsilon$ is denoted by $\delta$:
\begin{equation}\label{delta}
    \delta := \frac {\dd}{\dd  \epsilon}\, .
\end{equation}
Operator $\delta$ commutes, obviously, with spacetime derivatives:
\begin{eqnarray}
    \delta \left( \partial_\mu \varphi^K \right) =\delta \varphi^K_{,\mu} = \partial_\mu \delta \varphi^K \, ,
\end{eqnarray}
(this trivial observation is, in many textbooks, upgraded to the level of “the fundamental Lemma of the calculus of variations”). The following, obvious identity:
\begin{eqnarray}
  \delta \Lag &=& \frac {\partial \Lag}{\partial \varphi^K}\, \delta \varphi^K +
    \frac {\partial \Lag}{\partial \varphi^K_{,\lambda}}\, \delta \varphi^K_{,\lambda}= \left(\frac {\partial \Lag}{\partial \varphi^K} - \partial_\lambda
    \frac {\partial \Lag}{\partial \varphi^K_{,\lambda}}
    \right)\, \delta \varphi^K + \partial_\lambda \left(
    \frac {\partial \Lag}{\partial \varphi^K_{,\lambda}}\, \delta \varphi^K
    \right) \, ,  \qquad \label{E-L-deriv}
\end{eqnarray}
is the starting point of the calculus of variations. We call the first term of \eqref{E-L-deriv} the {\it volume part} (the \textit{bulk term}) and the second one the {\em boundary part}. Traditionally, one neglects the boundary part because, when integrated over a spacetime volume ${\cal O}$, imposing boundary conditions on its boundary $\partial {\cal O}$, implies $\delta \varphi^K |_{\partial {\cal O}} = 0$. This way, one derives the second-order partial differential equations for unknown functions $\varphi^K$, called {\em Euler-Lagrange equations}:
\begin{equation}\label{E-L}
    \frac {\partial \Lag}{\partial \varphi^K} - \partial_\lambda
    \frac {\partial \Lag}{\partial \varphi^K_{,\lambda}} = 0 \, ,
\end{equation}
as the necessary condition for the extremum of the functional:
\begin{equation}\label{intL}
    {\cal F} := \int_{\cal O} \Lag  \, ,
\end{equation}
within the functional-analytic space of fields fulfilling the fixed spacetime boundary conditions $\varphi^K |_{\partial {\cal O}} = f^K$.

For example, C. Misner, K. Thorne and J.A. Wheeler in their monograph \cite{Gravitation}, otherwise excellent as an introduction to General Relativity Theory, calculate only the volume part of the variation of the Hilbert Lagrangian density and neglect the boundary part. As a justification of such a careless procedure (see \cite{Gravitation}, page 520, just above their formula 21.86) they write:

\

\noindent
“Variation of the geometry interior to the boundary make no difference in the value of the surface term. Therefore, it has no influence on the equations of motion to drop the term (21.85)”.

\

The term, which is dropped there, is precisely the surface term. It was never calculated in this monograph (cf. also \cite{pieszy} for more details).

Such a procedure, whose origin goes back to Johann Bernoulli and his classical {\em brachistochrone problem} (1696), works perfectly for “optimization problems”, where Euler-Lagrange equations are elliptic. Unfortunately, it is entirely false in the case of dynamical theories, governed by hyperbolic -- not elliptic -- field equations. Most theoretical physicists have already learned long time ago that there is no extremum, but only a “saddle point”, in the hyperbolic case, and believe that the above observation represents the pinnacle of human understanding of the principles of variation.

Moreover, the “Feynman integral” quantization method is based precisely on the observation that the classical trajectory -- i.e. the action's stationary point -- gives the main contribution to the integral over classical trajectories. But the “extremum {\em versus} saddle point” dichotomy is not enough to describe the entire complexity of the variational problems, because  the real difficulty lies elsewhere! Namely: {\bf no matter whether we expect extremum or a saddle point}, imposing the spacetime-boundary conditions is {\it strictly forbidden} in hyperbolic theories. This means, that there is no solution of \eqref{E-L} for a generic choice of boundary data! To convince oneself that this is the case, it is enough to consider the “mother of all hyperbolic theories”, that is, the wave equation
\begin{equation}\label{wave}
    \left( \frac{\partial^2}{\partial x^2} - \frac{\partial^2}{\partial t^2} \right) \varphi = 0 \, ,
\end{equation}
in two-dimensional spacetime $\mathbb{R}^2=\{(t,x)\}$. Implying advanced and retarded coordinates $(u,v)=(t-x,t+x)$ and twice integrating it over the rectangle
\begin{eqnarray}
{\cal R}=\{(u,v)\in\mathbb{R}^2 : \, u_0\leq u \leq u_0+2\delta,\, v_0\leq v \leq v_0+2\epsilon\}\, ,
\end{eqnarray}
is easy to prove that field equation \eqref{wave} for the function $\varphi(u,v)$ is equivalent to the following identity
\begin{eqnarray}
   \varphi(u_0+2\delta,v_0+2\epsilon) - \varphi(u_0+2\delta,v_0) -\varphi(u_0,v_0+2\epsilon)+\varphi(u_0,v_0) = 0\, ,
\end{eqnarray}
for any choice of four numbers: $\{u_0,v_0,\epsilon,\delta \}$. The above equation  could be simply re-transformed to $(t,x)=(\frac{v+u}{2}, \frac{v-u}{2})$ coordinates. Then:
\begin{eqnarray}
    \varphi(t_0+\epsilon +\delta, x_0+\epsilon - \delta)- \varphi(t_0+\delta , x_0-\delta)  - \varphi(t_0+\epsilon, x_0+\epsilon) + \varphi(t_0,x_0)   &=& 0 \, . \qquad  \label{ident}
\end{eqnarray}
Putting $t_0=0$, $x_0=x$, $\delta = x$ and $\epsilon = 1-x$ we obtain an identity which must be fulfilled for any $0 \le x \le 1$:
\begin{equation}\label{iden-boundary}
  \varphi(1,1-x)- \varphi(x,0)  - \varphi(1-x,1) +  \varphi(0,x) = 0\, ,
\end{equation}

Consider now the spacetime volume ${\cal O} = [0,1] \times [0,1]$, i.e:
\begin{equation}\label{obszar}
        {\cal O} = \left\{ (t,x) \middle| 0 \le t \le 1 \ ; \ 0\le x \le 1 \right\} \, .
\end{equation}

We see (unfortunately, few physicists are aware of that!) that field equation implies here a constraint in space of boundary data: the value of the field on the upper wall (i.e.: $\varphi(1,\cdot)$) is uniquely given by its value on the remaining three walls (i.e.: $\varphi(\cdot , 0)$, $\varphi(\cdot , 1)$ and $\varphi(0, \cdot)$). There is no solution of the wave equation if the boundary data do not satisfy the constraint ${\cal C}$ defined by equation \eqref{iden-boundary}! Moreover: field equation \eqref{wave} is {\em equivalent} to this constraint!

Hence, the “brachistochrone” philosophy allows us to derive field equations {\bf provided we already know field equations}. By choosing the boundary data “at random”, the probability that there is any solution that satisfies this choice is zero.

We stress, that the constraint \eqref{iden-boundary} is still “relatively manageable” for the simple spacetime rectangle \eqref{obszar}, whereas for a generic spacetime volume ${\cal O}$ (e.g., a time slice $\{a \le t \le b ; x \in \mathbb{R}\}$) it is a much worse, very singular, non-close subspace in any reasonable topology of boundary data.

We conclude that, as a method of {\em deriving} field equation, the above “brachistochrone philosophy” breaks down completely in case of hyperbolic field equations. Furthermore, we stress, however, that  this method works perfectly for purposes of optimization problems governed by elliptic field equations, because boundary conditions can be imposed without any restriction in those cases.

Our conclusion does not mean that the formulae used in the Lagrangian field theory are false and useless! Below, we will give them a coherent mathematical meaning, which replaces the nonsensical heuristics based on the “spacetime boundary conditions” and “the least action principle”. The (obviously false) term “principle of least action” was popularized by Pierre Louis Maupertuis in an attempt to support the extremely optimistic philosophy of G. Leibniz, so much ridiculed by Voltaire in his novella \textit{Candide}.

For this purpose we propose to work “on shell”, i.e. to restrict oneself only to those field configurations (and their jets $(\varphi^K, \varphi^K_{,\lambda} )$) which fulfil field equations \eqref{E-L}. This means, that -- instead of neglecting the boundary part of \eqref{E-L-deriv} -- we neglect its volume part. This way, the formula \eqref{E-L-deriv} is no longer an identity, but becomes an equation imposed on the first jet of the field configuration:
\begin{eqnarray}
    \delta {\cal L}(\varphi^K , \varphi^K_{,\lambda} )  &=& \partial_\lambda \left(
    \frac {\partial {\cal L}}{\partial \varphi^K_{,\lambda}}\, \delta \varphi^K
    \right)  = \left( \partial_\lambda p_K^{\ \lambda} \right)\, \delta \varphi^K +
    p_K^{\ \lambda} \, \delta \varphi^K_{,\lambda}
    \, , \label{E-L-deriv1}
\end{eqnarray}
where the canonical momentum $p_K^{\ \lambda}$ has been introduced as a shortcut notation for the following expression:
\begin{equation}\label{mom}
    p_K^{\ \lambda} := \frac {\partial {\cal L}}{\partial \varphi^K_{,\lambda}} \, .
\end{equation}
We see that the system of first-order partial differential equations \eqref{E-L-deriv1} for the variables $\left(\varphi^K , p_K^{\ \lambda} \right)$ is equivalent to the second-order Euler-Lagrange equation \eqref{E-L}, written in the form:
\begin{equation}\label{E-L2}
    \partial_\lambda p_K^{\ \lambda} = \frac {\partial {\cal L}}{\partial \varphi^K }\, ,
\end{equation}
together with definition \eqref{mom} of momenta. This way, at each spacetime point ${\bf m}~=~(x^\mu)$, field equations \eqref{E-L-deriv1} can be considered as a symplectic relation (i.e. a Lagrangian submanifold) in a symplectic space ${\cal P}_{\bf m}$ parameterized by the following “generalized jets” of fields: $(\varphi^K ,\,  \varphi^K_{,\lambda} , \, p_K^{\ \lambda},\,  j_K:= \partial_\lambda p_K^{\ \lambda})$. Mathematically, this approach was rigorously defined in \cite{Tulcz}, \cite{CJK} and \cite{Kij-Moreno2015}, but its strength consists in the fact that it is very well adapted for practical calculations in both the Lagrangian and Hamiltonian formalism (especially when constraints are present) and avoids the nonsensical procedure of “imposing the spacetime-boundary conditions”. Practically, this approach is based on splitting canonical field variables into two groups: the “control parameters” (those, who appear under the sign “$\delta$” -- in case of \eqref{E-L-deriv1} these are configuration variables $\varphi^K$ and their “velocities” $\varphi^K_{,\lambda}$) and the “response parameters” (in case of \eqref{E-L-deriv1} these are momenta $p_K^{\ \lambda}$ and their “currents” $j^K= \partial_\lambda p_K^{\ \lambda}$). Field equations are then treated as the “control -- response relation”. We shall use this simplified formalism in the sequel.

These techniques were informally present already in classical texts, written by Lagrange, Carath{\'e}odory and other pioneers of the calculus of variations, and also in classical thermodynamics. As an example, consider the classical, thermodynamical formula:
\begin{equation}
    \delta U(V, S) = -p\, \delta V + T\, \delta S\, ,
\end{equation}
equivalent to:
\begin{equation}
    p = - \frac {\partial U}{\partial V} \, , \qquad T = \frac {\partial U}{\partial S}
    \, ,
\end{equation}
which selects the two-dimensional subspace of all the physically admissible states of a simple thermodynamical body, as a Lagrangian submanifold within a four-dimensional symplectic manifold parameterized by $(V,S,p,T)$ (volume, entropy, pressure, temperature) and equipped with the canonical (“God given”) symplectic form:
\begin{equation}\label{Omega}
    \omega = - \delta p \wedge \delta V + \delta T \wedge \delta S \, .
\end{equation}
Similarly, classical mechanics can be formulated as a symplectic relation:
\begin{equation}
    \delta L(q, \dot{q}) = \frac {{\rm d}}{{\rm d}t} \left( p\,  \delta q\right)
     = \dot{p} \, \delta q + p\,  \delta \dot{q}\, ,
\end{equation}
(equivalent to: $\dot{p} =  \frac {\partial L}{\partial q}$, $p = \frac {\partial L}{\partial \dot{q}}$) with respect to the canonical symplectic form:
\begin{equation}\label{Omega1}
    \omega = \frac {{\rm d}}{{\rm d}t} \left( \delta p \wedge \delta q\right) =
    \delta \dot{p} \wedge \delta q + \delta p \wedge \delta \dot{q} \, .
\end{equation}
Legendre transformations, like transition from adiabatic to the thermostatic insulation in thermodynamics, or from the Lagrangian to the Hamiltonian picture in mechanics, are simply described in this formalism as an exchange between control and response parameters:  $T$ {\it versus} $S$ in \eqref{Omega} and $p$ {\it versus} $\dot{q}$ in \eqref{Omega1}.

In case of the gravitational field, the missing boundary term in the Wheeler-Misner-Thorn formula (21.86) was calculated in paper \cite{pieszy} and will be presented below. Following V.A.~Fock, who realized that the substantial simplification of the canonical structure of gravity theory is obtained when, instead of the covariant tensor $g_{\mu\nu}$, one represents the metric structure of spacetime by its contravariant density (see \cite{Fock}),  we introduce the following notation\footnote{See footnote on page (\ref{first}) for physical units used here.}:
\begin{equation}\label{pi2}
{\pi}^{\mu\nu} := \frac 1{16 \pi} \sqrt{|\det g|} \  g^{\mu\nu}
    = \frac {\partial {\cal L}_H}{\partial R_{\mu\nu}} \, .
\end{equation}
With respect to the Fock's book \cite{Fock}, our modest contribution here consists in incorporating the gravitational constant -- i.e. $\frac 1{16 \pi}$ in geometric units -- into the “momentum” variable $\pi$. Moreover, the following object arises automatically in the formula for the variation:
\begin{equation}\label{pi4}
{\pi}_{\lambda}^{\ \mu\nu\kappa} =
\frac {\partial {\cal L}_H}{\partial \Gamma^\lambda_{\ \mu\nu,\kappa}}
 =
{\pi}^{\mu\nu} \, \delta^\kappa_\lambda -\delta_{\lambda}^{(\mu}\, \pi^{\nu)\kappa}  \, .
\end{equation}
The Hilbert Lagrangian density assumes now the following form:
\begin{eqnarray}
  {\cal L}_H  =  {\cal L}_H (g,   \Gamma , \partial \Gamma) :=
    \frac {\sqrt{|\det g|}}{16 \pi} R =  \pi^{\mu\nu} R^\lambda_{\ \mu\lambda\nu} =\pi^{\mu\nu} R_{\mu\nu} \, ,
    \label{Lag H}
\end{eqnarray}
    where the Riemann and Ricci tensors are defined as usual:
\begin{eqnarray}
  R^\lambda_{\ \mu\nu\kappa} &:=&  \Gamma^\lambda_{\ \mu\kappa,\nu} -
     \Gamma^\lambda_{\ \mu\nu,\kappa} +
   \Gamma^\lambda_{\ \alpha\nu}\,\Gamma^\alpha_{\ \mu\kappa} -
   \Gamma^\lambda_{\ \alpha\kappa}\,  \Gamma^\alpha_{\ \mu\nu}\, ,
   \label{def riemann} \qquad  \\
      R_{\mu\nu} &:=&  R^\lambda_{\ \mu\lambda\nu} \label{def ricci}\, .
\end{eqnarray}
The missing boundary term in the variation of ${\cal L}_H$ follows from an identity, which is universally valid (see  \cite{pieszy} for the proof) for an arbitrary metric tensor $g$ and an arbitrary symmetric connection $\Gamma$ (not necessarily metric):
\begin{eqnarray}
\delta {\cal L}_H &=& -\
\frac 1{16 \pi}\, {\cal G}^{\mu\nu}\, \delta g_{\mu\nu} -
\left( \nabla_\kappa {\pi}_{\lambda}^{\ \mu\nu\kappa} \right) \, \delta
{\Gamma}^{\lambda}_{\ \mu\nu}+\partial_\kappa \left( {\pi}_{\lambda}^{\ \mu\nu\kappa} \ \delta
{\Gamma}^{\lambda}_{\mu\nu} \right) \, .
\label{variation_full}
\end{eqnarray}
We see, that variation of the Hilbert Lagrangian \eqref{LHilb} with respect to the connection can be easily calculated:
\begin{equation}\label{deltaLHil}
    \frac {\delta {\cal L}_H}{\delta {\Gamma^{\lambda}}_{ \mu\nu}} =-
    \nabla_\kappa {\pi}_{\lambda}^{\ \mu\nu\kappa} \, ,
\end{equation}
where $\nabla$ denotes the covariant derivative with respect to $\Gamma$ (see again \cite{pieszy}). An obvious algebraic identity:
\begin{equation}\label{d_detg}
    \delta \left( \sqrt{|\det g|} \right) =
    \frac 12 \sqrt{|\det g|} \  g^{\alpha\beta} \ \delta g_{\alpha\beta} \, ,
\end{equation}
implies yet another representation of variation of the Hilbert Lagrangian density:
\begin{eqnarray}
  R_{\mu\nu}\,  \delta \pi^{\mu\nu} &=& \frac 1{16 \pi} \, R_{\mu\nu}\,  \delta
  \left( \sqrt{|\det g| } \, g^{\mu\nu} \right)= \nonumber \\
    &=&  \frac {\sqrt{|\det g|}}{16 \pi}\, \left( R_{\mu\nu} \, \delta g^{\mu\nu} +
    \frac12 \,  R\,  g^{\alpha\beta}\,  \delta g_{\alpha\beta}\right)= \nonumber \\
    &=& - \frac {\sqrt{|\det g|}}{16 \pi}\, \left(
    R^{\mu\nu} -  \frac 12\, R \,g^{\mu\nu} \right)\, \delta g_{\mu\nu}
    =  -  \frac {1}{16 \pi}\,
    {\cal G}^{\mu\nu}\, \delta g_{\mu\nu}    \, , \label{R delta pi}
\end{eqnarray}
which enables us to rewrite \eqref{variation_full} in an equivalent form:
\begin{eqnarray}
\delta {\cal L}_H &=& R_{\mu\nu} \, \delta \pi^{\mu\nu} -
\left( \nabla_\kappa {\pi}_{\lambda}^{\ \mu\nu\kappa} \right)\,  \delta
{\Gamma}^{\lambda}_{\ \mu\nu} +  \partial_\kappa \left( {\pi}_{\lambda}^{\ \mu\nu\kappa}\,  \delta
{\Gamma}^{\lambda}_{\ \mu\nu} \right) \, .
\label{variation_full_pi}
\end{eqnarray}
In case of the purely metric Hilbert Lagrangian \eqref{LHilb} without any matter, the second term in both \eqref{variation_full} and \eqref{variation_full_pi} vanishes automatically -- see definitions \eqref{pi2}, \eqref{pi4} -- which implies the following field equation:
\begin{eqnarray}
     {\nabla}_\kappa {\pi}_{\lambda}^{\ \mu\nu\kappa} = 0 \  \Longleftrightarrow\  \nabla_{\kappa} g_{\mu\nu} = 0 \  \Longleftrightarrow\ \Gamma^{\lambda}_{\ \mu\nu}=\ \mGamma^{\lambda}_{\ \mu\nu}\, . \label{metricity0}
\end{eqnarray}
Hence, we end up with:
\begin{eqnarray}
  \delta {\cal L}_H (g, \, \partial g, \, \partial^2 g) &=&  - \frac 1{16 \pi}\, \kolo{\cal G}^{\mu\nu}\,  \delta g_{\mu\nu} + \partial_\kappa \left( {\pi}_{\lambda}^{\ \mu\nu\kappa} \, \delta \mGamma^{\kappa}_{ \ \lambda\mu} \right)=
\label{varG}\\
 &=& \kolo{R}_{\mu\nu}\, \delta \pi^{\mu\nu} + \partial_\kappa \left( {\pi}_{\lambda}^{\ \mu\nu\kappa}\, \delta \mGamma^{\kappa}_{ \ \lambda\mu} \right) \, ,
\label{varGr}
\end{eqnarray}
where the last, boundary term in both (equivalent) formulae: \eqref{varG} and \eqref{varGr}, represents the missing term in the Wheeler-Misner-Thorn formula (21.86), page 520 (a circle above geometric objects denotes their metricity, but the formula is valid for an arbitrary symmetric connection $\Gamma^{\kappa}_{ \ \lambda\mu}$ too).

The boundary term $\partial_\kappa \left( {\pi}_{\lambda}^{\ \mu\nu\kappa} \delta \Gamma^{\lambda}_{\ \mu\nu} \right)$ in the variational formula provides a strong argument for the affine approach, where connection $\Gamma^{\kappa}_{ \ \lambda\mu}$ plays role of the gravitational field configuration, whereas the metric tensor, encoded by the tensor-density~$\pi$, plays role of its canonically conjugate momentum, according to formulae \eqref{pi4} and \eqref{pi2}.

There is also a strong {\em physical} argument, based on the Newton's First Law, for choosing the connection $\Gamma$ (instead of the metric tensor) as the fundamental configuration variable of the gravitational field (see \cite{Senger1}).

\section{Mathematical structure of the Palatini variational principle. Emergence of non-metricity}\label{Palatini}

Below, we are going to use the above formalism of “symplectic relations” (in contrast to  “the least action principle”, based on the “spacetime boundary value problem”, which is strictly forbidden by the mathematical structure of the theory).

The modern Palatini approach is based on the following observation: derivatives of the metric enter linearly into the Levi-Civita connection \eqref{Gamma0} and, whence, to calculate variation of \eqref{Ltot} it is useful to “change variables” in space of second jets of the metric: from $(g,\, \partial g,\, \partial^2 g)$ to $(g, \, \Gamma,\, \partial\Gamma)$. Next, one can treat both the metric tensor and the connection coefficients as independent quantities: we do not assume  {\it a priori} the metricity condition \eqref{metricity0} of the connection, but derive it as one of the Euler-Lagrange equations. This simple trick is often called “the Palatini method of variation”, although it has been earlier used by Hilbert, Weyl, and Einstein himself. In fact, the originality of the Palatini paper \cite{palatini} with respect to these authors consists in the fact that, when calculating variation $\delta \Lag_H$, he was able to select properly the contribution due to the variation $\delta \Gamma$ of connection. However, in those days, the connection was not regarded as an independent geometric object: only “Christoffel symbols” were known, equivalent to our “metric connection” \ $\mGamma$. This fact obscures considerably the understanding of the Palatini's contribution.

We see that the contravariant density of metric $\pi^{\mu\nu}$  \eqref{pi2} plays the role of the “momentum canonically conjugate to the connection” (derivative of the Lagrangian density with respect to the derivatives of the connection). As already mentioned in the previous Section (see formula \eqref{deltaLHil}), variation of the Hilbert Lagrangian  with respect to the connection is following:
\begin{equation}
    \frac {\delta {\cal L}_H}{\delta {\Gamma^{\lambda}}_{ \mu\nu}} =
   - \nabla_\kappa {\pi}_{\lambda}^{\ \mu\nu\kappa} \, ,
\end{equation}
where $\nabla$ denotes the covariant derivative with respect to the symmetric connection~$\Gamma$. Consequently, variation of the total metric Lagrangian density ${\cal L}_g = {\cal L}_{matt} + {\cal L}_H$ equals:
\begin{eqnarray}
  0 = \frac {\delta \Lag_g}{\delta \Gamma^{\lambda}_{\ \mu\nu}} &:=&
  \frac {\partial \Lag_g}{\partial \Gamma^{\lambda}_{\ \mu\nu}} -
    \partial_\kappa \frac {\partial \Lag_g}{\partial \Gamma^{\lambda}_{\ \mu\nu\kappa}} = \frac {\partial \Lag_{matt}}{\partial \Gamma^{\lambda}_{\ \mu\nu}}
    - \nabla_\kappa {\pi}_{\lambda}^{\ \mu\nu\kappa}\, ,
    \label{nm} \\
  0 = \frac {\delta \Lag_g}{\delta g_{\mu\nu}} &:=&
  \frac {\partial \Lag_g}{\partial g_{\mu\nu}} -
    \partial_\kappa \frac {\partial \Lag_g}{\partial g_{\mu\nu\kappa}}=
    \frac {\partial \Lag_g}{\partial g_{\mu\nu}}=  \frac 1{16 \pi}
    \, \left( 8 \pi {\cal T}^{\mu\nu} - {\cal G}^{\mu\nu} \right) \, , \label{ei}
\end{eqnarray}
equivalently:
\begin{eqnarray}
 \frac {\partial \Lag_{matt}}{\partial \Gamma^{\lambda}_{\ \mu\nu}}
    &=&\nabla_\kappa {\pi}_{\lambda}^{\ \mu\nu\kappa}\, , \label{eqq1} \\
     \mathcal{T}^{\mu\nu}=: 2\,  \frac {\partial \Lag_{matt}}{\partial g_{\mu\nu}}&=&  \frac{1}{8\pi}\, {\cal G}^{\mu\nu}\, . \label{eqq2}
\end{eqnarray}
We see that, in particular case of a connection-independent-matter-Lagrangian \eqref{Lmatt-specific}, we have
\begin{equation}\label{noGamma}
    \frac {\partial \Lag_{matt}}{\partial \Gamma^{\lambda}_{\ \mu\nu}} = 0 \, ,
\end{equation}
and, whence, Euler-Lagrange equation \eqref{nm} (variation with respect to the connection) reduces (see eq.~\eqref{metricity0}) to metricity condition $\Gamma = \ \mGamma$.
Hence, what was assumed {\it a priori} in the purely metric approach, here, in the Palatini approach (i.e. in the mixed -- metric-affine approach), is obtained as one of field equations, i.e. as a result of the variational principle.

\

{\bf Example:} In electrodynamics, which was historically the first example, carefully analysed by Hilbert (see \cite{hil}), we have:
\begin{eqnarray}
    {\cal L}_{matt}     = - \frac 14 \sqrt{|\det g|}\, f_{\mu\nu}\,  f_{\alpha\beta}\,  g^{\mu\alpha}\,  g^{\nu\beta}\, ,
\end{eqnarray}
where
\begin{eqnarray}
    f_{\mu\nu} = \partial_\mu A_\nu - \partial_\nu A_\mu = 2 A_{[\nu , \mu ]}\, ,
\end{eqnarray}
denotes the Faraday 2-form. Thus, the electromagnetic four-potential $A_\mu$ plays the role of the matter field $\phi$. We see that this matter Lagrangian fulfils \eqref{noGamma} and, whence, metricity condition can be either assumed {\em a priori} (metric picture) or obtained as one of the field equations (Palatini picture).

 According to equation~\eqref{mom_p}, the role of the momentum “$p^{\mu\lambda}$”, canonically conjugate to the “matter variable” $A_\mu$, is assumed by the  contravariant tensor-density:
 \begin{eqnarray}
     {\cal F}^{\mu\lambda} := \frac {\partial {\cal L}_{matt}}{\partial A_{\mu ,\lambda}}
    = \sqrt{|\det g|}\, f_{\alpha\beta} \, g^{\alpha\mu}\, g^{\beta\lambda} =
    \sqrt{|\det g|}\, f^{\mu\lambda} \, .
 \end{eqnarray}
Hence, Euler-Lagrange equation \eqref{E_L} encodes the Maxwell equations $\partial_{\nu} {\cal F}^{\mu\nu} = 0$, whereas~\eqref{E_MOM} becomes the Maxwell (symmetric!) energy-momentum tensor density
\begin{equation}\label{E-M-Maxw}
        {\cal T}^{\mu\nu} = 2 \,\frac {\partial {\cal L}_{matt}}{\partial g_{\mu\nu}} =
    \sqrt{|\det g|}\, \left[f^{\mu}_{\ \beta}\,  f^{\nu \beta} - \frac 14 \, g^{\mu\nu}\, f_{\alpha\beta} \, f^{\alpha\beta}
    \right]\, .
\end{equation}
This means that
\begin{eqnarray}
{\cal T}^{\mu\nu} =
  \sqrt{|\det g|} \  T^{\mu\nu} \, ,    
\end{eqnarray}
with:
\begin{eqnarray}
    T^{\mu\nu} = \frac 2{\sqrt{|\det g|}} \,\frac {\partial {\cal L}_{matt}}{\partial g_{\mu\nu}} =
    f^{\mu}_{\ \beta}\,  f^{\nu \beta} - \frac 14 \, g^{\mu\nu}\, f_{\alpha\beta} \, f^{\alpha\beta}
    \, .
\end{eqnarray}
There is no doubt that, indeed, this quantity describes the energy-momentum density carried by the Maxwell field (cf. \cite{Gravitation}, \cite{pieszy}).

\

Unfortunately, the naive implementation of the above “Palatini method” fails when the matter Lagrangian depends upon connection coefficients $\Gamma$, which are contained in covariant derivatives of the matter fields.
This happens for generic matter fields, like vector, spinor or tensor fields, where covariant derivatives are necessary as the “building blocks” of the coordinate-invariant matter Lagrangian density. Hence, in a generic case, we have:
\begin{equation}\label{Lmatt-2}
        {\cal L}_{matt} = {\cal L}_{matt} (\phi , \partial \phi , g, \Gamma) \, ,
\end{equation}
and, consequently:
\begin{equation}\label{calP_10}
    {\cal P}_{\ \ \lambda}^{\mu\nu} := \frac {\partial {\cal L}_{matt}}
    {\partial  \Gamma^{\lambda}_{\ \mu\nu} } \ne 0 \, .
\end{equation}
We see, that the metricity of the connection (equation \eqref{metricity0}) {\bf is not} recovered! Instead, we would have obtained the following value of the non-metricity of the connection:
\begin{equation}\label{nonmetr}
    \nabla_\kappa {\pi}_{\lambda}^{\ \mu\nu\kappa} = {\cal P}_{\ \ \lambda}^{\mu\nu}
    :=  \frac {\partial {\cal L}_{matt}}
    {\partial \Gamma^{\lambda}_{\ \mu\nu} } \ne 0 \, .
\end{equation}
Given the value of the (newly defined above) field ${\cal P}_{\ \ \lambda}^{\mu\nu}$, this equation can be easily solved with respect to the connection $\Gamma$, i.e. $\Gamma$ can be uniquely reconstructed in the form:
\begin{equation}\label{Gamma-gen}
    \Gamma^\lambda_{\ \mu\nu} = \ \mGamma^{\lambda}_{\
    \mu\nu} +
    N^{\lambda}_{ \ \mu\nu} \, .
\end{equation}
This way, the covariant derivative of the metric: $\nabla \pi$, calculated with respect to the connection $\Gamma$, splits into the sum: the covariant derivative $\ \mnabla \pi$, which vanishes identically, plus a combination of $N$'s multiplied by $\pi$. Finally, we obtain the following, linear equation for $N^{\lambda}_{ \ \mu\nu}$:
\begin{eqnarray}
    {\cal P}_{\ \ \lambda}^{\mu\nu} &=& \pi^{\mu\alpha} N^{\nu}_{ \ \lambda\alpha} +
    \pi^{\nu\alpha} N^{\mu}_{ \ \lambda\alpha} - \pi^{\mu\nu} N^{\alpha}_{ \ \lambda\alpha} - \frac 12
    \left( \delta^\mu_\lambda N^{\nu}_{ \ \alpha\beta} + \delta^\nu_\lambda N^{\mu}_{ \ \alpha\beta}
    \right) \pi^{\alpha\beta} \, ,\label{N=lincalP}
\end{eqnarray}
which can easily be solved:
\begin{eqnarray}
  N_{\kappa\lambda\mu} &=&
 \frac{8\pi}{\sqrt{|\det g|}}\, \left[\mathcal{P}_{\kappa\lambda\mu} + \mathcal{P}_{\kappa\mu\lambda} - \mathcal{P}_{\lambda\mu\kappa} + g_{\mu\lambda}\, \left(\frac 12\, \mathcal{P}^{\sigma}_{\ \sigma \kappa} -\mathcal{P}_{\kappa\sigma}^{\ \  \sigma} \right)+\right. \nonumber \\
    &&\qquad \qquad+\left.\frac{2}{3}\, g_{\kappa(\lambda}\, \mathcal{P}_{\mu)\sigma}^{\ \ \ \sigma}- \mathcal{P}^{\sigma}_{\ \sigma(\lambda}\, g_{\mu)\kappa} \right] \, .
 \label{nonmetricity}
\end{eqnarray}
(proof in the Appendix \ref{calP N}, see also Appendix \ref{proof lemma non-metricity}).

\

In the present paper we show that it is worthwhile to use the above non-metric connection \eqref{Gamma-gen}, arising from the naive implementation \eqref{nonmetr} of the “Palatini method of variation”, because it simplifies considerably the standard description of the canonical structure of General Relativity Theory. In particular, it is precisely the one, which describes properly the field energy (cf.~\cite{Belinfante}, \cite{Belinfante2}, \cite{Rosenfeld}) and also arises in its “purely affine” formulation (cf.~\cite{affine}, \cite{pieszy}, \cite{kij-werp2007}).

\subsection{Simplified version of the Palatini principle}

To analyse better the mathematical structure of the Palatini principle, let us begin with the special case \eqref{Lmatt-specific}, where the matter Lagrangian density ${\cal L}_{matt}$ does not depend upon derivatives of the metric tensor (i.e.~upon connection $\mGamma$). We see that the “on shell” variation of the matter Lagrangian density can be written as:
\begin{eqnarray}
  \delta {\cal L}_{matt} (\phi , \partial \phi , g)  =  \frac {\partial {\cal L}_{matt}}{\partial g_{\mu\nu}}\, \delta g_{\mu\nu} +
    \partial_\lambda  \left( p^\lambda\,  \delta \phi \right) \, . \label{gen_1}
\end{eqnarray}

Because ${\cal L}_{matt}$ does not depend upon derivatives of the metric tensor (i.e.~upon connection), we can add an extra, trivial term to formula \eqref{gen_1}:
\begin{eqnarray}
    \delta {\cal L}_{matt} (\phi , \partial \phi , g) &=&
    \frac {\partial {\cal L}_{matt}}{\partial g_{\mu\nu}}\,  \delta g_{\mu\nu} + {\cal P}_{\ \ \lambda}^{ \mu\nu} \, \delta
    {\Gamma^{\lambda}}_{ \mu\nu} +     \partial_\lambda \left( p^\lambda\, \delta \phi \right) \, ,\label{gen_2}
\end{eqnarray}
because it is equivalent to an extra, trivial field equation:
\begin{equation}\label{calP_1}
    {\cal P}_{\ \ \lambda}^{\mu\nu} := \frac {\partial {\cal L}_{matt}}
    {\partial {\Gamma^{\lambda}}_{ \mu\nu} } = 0 \, .
\end{equation}
Formulae \eqref{gen_2} and \eqref{calP_1} provide the simplest proof of the validity of the “Palatini method of variation” for a Lagrangian density of the type \eqref{Lmatt-specific}: instead of the second order variation of \eqref{Ltot} with respect to the metric $g$, we may equivalently perform the first order variation with respect to the two ({\em a priori} independent) geometric fields: $g$ and $\Gamma$. Indeed, defining the Palatini Lagrangian density:
\begin{eqnarray}
    &&\Lag_P(\phi , \partial \phi , g, \Gamma , \partial \Gamma) :=  \Lag_{matt}(\phi , \partial \phi , g) + \frac {\sqrt{\det g}}{16 \pi} \, R(g, \Gamma , \partial \Gamma) \, ,\label{Pal}
\end{eqnarray}
and using \eqref{variation_full} together with \eqref{gen_2}, we obtain:
\begin{eqnarray}
  \delta \Lag_P &=& \frac 1{16 \pi} \left( 8 \pi {\cal T}^{\mu\nu} -
  {\cal G}^{\mu\nu} \right)\, \delta g_{\mu\nu} +
  \left({\cal P}_{\ \ \lambda}^{ \mu\nu} - \nabla_\kappa {\pi}_{\lambda}^{\ \mu\nu\kappa} \right)\, \delta
{\Gamma}^{\lambda}_{\ \mu\nu}+ \nonumber \\
   & &  + \partial_\kappa \left( p^\kappa\,  \delta \phi +
   {\pi}_{\lambda}^{\ \mu\nu\kappa}\,  \delta
{\Gamma}^{\lambda}_{\ \mu\nu} \right)\, , \label{var LP}
\end{eqnarray}
which simply means that its “variational derivatives” are equal to \eqref{nm} -- \eqref{ei}. Moreover, because quantity ${\cal P}$ vanishes (see \eqref{calP_1}), equation \eqref{nm} reduces to \eqref{metricity0}. This means that the metricity condition of the connection -- which was not assumed {\it a priori} -- is obtained as one of the Euler-Lagrange equations of theory, namely equation \eqref{metricity0}.

\

We conclude, that in case of a special matter Lagrangian ${\cal L}_{matt}$, fulfilling~\eqref{Lmatt-specific} (like electrodynamics), the original, purely metric, second order variational principle \eqref{Ltot}, where variation is performed with respect to the metric tensor exclusively, is equivalent to the first order “Palatini variational principle”, where variation is performed with respect to both geometric quantities: connection and metric, independently.

\subsection{Generic case}

The above equivalence of the two “methods of variation” {\bf does not hold} {\em a priori} in a generic case of a Lagrangian density ${\cal L}_{matt} (\phi , \partial \phi , g, \mGamma)$, i.e.~where \eqref{calP_1} is no longer true, because the  corresponding Euler-Lagrange equation, resulting from variation with respect to $\Gamma$:
\begin{equation}\label{non-m}
    \nabla_\kappa {\pi}_{\lambda}^{\ \mu\nu\kappa}
    =
   {\cal P}_{\ \ \lambda}^{\mu\nu} := \frac {\partial {\cal L}_{matt}}
    {\partial \mGamma^{\lambda}_{\ \mu\nu} } \ne 0
    \, ,
\end{equation}
would imply the non-metricity of the connection, i.e.~the theory which is -- naively -- non-equivalent to the original Einstein theory.

\

{\bf Example:} If $\phi= (\phi^\alpha )$ is a vector field, then the unique way to construct an invariant scalar out of derivatives of $\phi$ is to use covariant derivatives:
\[
    \mnabla_{\beta} \phi^\alpha = \partial_{\beta} \phi^\alpha +
    \mGamma^{\alpha}_{\ \beta\sigma}\, \phi^\sigma \, ,
\]
and, whence, according to \eqref{non-m} we have the non-metricity tensor, which does not vanish identically:
\begin{eqnarray}
    {\cal P}_{\ \ \kappa}^{ \lambda\mu} &=& \frac {\partial {\cal L}_{matt}}{\partial \mGamma^{\kappa}_{\ \lambda\mu}} =\frac {\partial {\cal L}_{matt}}{\partial \left(\mnabla_{\beta} \phi^\alpha\right)} \, \frac{\partial \left(\mnabla_{\beta} \phi^\alpha \right)}{\partial \mGamma^{\kappa}_{\ \lambda\mu}}   = \frac {\partial {\cal L}_{matt}}{\partial \phi^\alpha_{\ ,\beta}}\, \delta^{\alpha}_{\kappa}\, \delta^{(\lambda}_{\beta}\, \delta^{\mu)}_{\sigma}\, \phi^{\sigma}   =
    p_\kappa^{\ (\lambda}\, \phi^{\mu)}   \, .\label{calP_3n}
\end{eqnarray}
We see, that the naive (i.e.~straightforward) implementation of the “Palatini method of variation” implies the non-metricity of the connection and, consequently, field equations which are {\em a priori} non-equivalent with the original metric theory.

However, we are going to prove in this paper, that the universal “Palatini variational principle”, consisting in varying with respect to independent geometric fields $\Gamma$ and $g$, {\it does exist}, if we only accept the above non-metricity, defined by equation~\eqref{non-m}. The resulting field theory {\bf is not} a new physical theory, but merely a mathematically equivalent reformulation of the standard {\it purely metric} theory, based on Lagrangian density \eqref{Ltot}. Even if formulated in terms of non-metric connection, it is completely equivalent to the original metric theory when recalculated in terms of the metric $g$ and its derivatives. This equivalence is the main result of this paper and will be proved in the sequel. The essence of our construction can be formulated simply as follows: it is worthwhile to combine information about matter field (contained in the “non-metricity tensor $N$” and given by formula \eqref{nonmetricity}) together with the metric connection $\mGamma$. The resulting non-metric connection $\Gamma = \ \mGamma + N$, significantly simplifies the structure of the theory.

Technically, the simplest way to prove the universal validity of Palatini formulation goes through the purely affine formulation. As will be shown in the next section, the affine picture arises naturally here and provides the easiest method to simplify the canonical structure of the theory.

\section{Derivation of the universal affine formulation of General Relativity Theory from the purely metric formulation }\label{univ-aff}

Let us, therefore, begin with the purely metric formulation of General Relativity Theory. In case of a generic matter Lagrangian density \eqref{ogo} we have:
\begin{eqnarray}
    \delta {\cal L}_{matt} (\phi ,\, \partial \phi ,\, g,\, \mGamma) &=&
    \frac {\partial {\cal L}_{matt}}{\partial g_{\mu\nu}}\,  \delta g_{\mu\nu} + {\cal P}_{\ \ \lambda}^{ \mu\nu}\, \delta
    \mGamma^{\lambda}_{\ \mu\nu} +
    \partial_\lambda \left( p^\lambda \, \delta \phi \right) \, ,\label{var Lag matt 0}
\end{eqnarray}
where, in general, the field ${\cal P}$ does not vanish identically:
\begin{equation}\label{def: calP}
    {\cal P}_{\ \ \lambda}^{\mu\nu} := \frac {\partial {\cal L}_{matt}}
    {\partial \mGamma^{\lambda}_{\ \mu\nu} } \neq 0 \, .
\end{equation}
\begin{lemma}
\label{lemma varGamma}
    The term $\mathcal{P}_{\ \ \lambda}^{ \mu\nu}\,  \delta\! \mGamma^{\lambda}_{\ \mu\nu}$ in formula \eqref{var Lag matt 0} can be rewritten as follows:
\begin{eqnarray*}
    \mathcal{P}_{\ \ \lambda}^{ \mu\nu}\,  \delta \mGamma^{\lambda}_{\  \mu\nu} &=&  \partial_{\kappa} \left( {\cal R}^{\mu\nu\kappa}\, \delta g_{\mu\nu} \right) - \left(\mnabla_{\kappa} \mathcal{R}^{\mu\nu\kappa}  \right)\,  \delta g_{\mu\nu}      \, ,
\end{eqnarray*}
where:
\begin{eqnarray}
    {\cal R}^{\mu\nu\kappa} &:=&  \frac 12 \left(  \mathcal{P}^{ \kappa\mu\nu} + \mathcal{P}^{\kappa \nu \mu} - \mathcal{P}^{ \mu\nu\kappa}  \right) \,
\label{def: calR}
\end{eqnarray}
and $\, \mnabla$ is the covariant derivative with respect to the Levi-Civita connection $\,\mGamma $.
\end{lemma}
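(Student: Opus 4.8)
The plan is to reduce the claimed identity to a purely algebraic one by first trading the connection-variation $\delta\mGamma$ for $\delta g$, and only afterwards re-packaging the result as a total divergence minus a bulk term. The starting point I would use is the classical ``Palatini identity'' for the variation of the Levi-Civita connection: since $\delta\mGamma^{\lambda}_{\ \mu\nu}$ is a difference of two connections it is a genuine tensor, and varying~\eqref{Gamma0} and re-assembling the ordinary derivatives into covariant ones gives
\begin{equation*}
\delta \mGamma^{\lambda}_{\ \mu\nu} = \tfrac12\, g^{\lambda\sigma}\left( \mnabla_\mu \delta g_{\nu\sigma} + \mnabla_\nu \delta g_{\mu\sigma} - \mnabla_\sigma \delta g_{\mu\nu}\right)\, .
\end{equation*}
Substituting this into $\mathcal{P}_{\ \ \lambda}^{\mu\nu}\,\delta\mGamma^{\lambda}_{\ \mu\nu}$ and writing $\mathcal{P}^{\mu\nu\sigma}:=g^{\sigma\lambda}\mathcal{P}_{\ \ \lambda}^{\mu\nu}$ turns the whole left-hand side into a single contraction of $\mathcal{P}$ against $\mnabla\delta g$.

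Next I would carry out the index bookkeeping. The quantity $\mathcal{P}_{\ \ \lambda}^{\mu\nu}$ is symmetric in its two contravariant indices, $\mathcal{P}_{\ \ \lambda}^{\mu\nu}=\mathcal{P}_{\ \ \lambda}^{\nu\mu}$, inherited from the symmetry of $\mGamma^{\lambda}_{\ \mu\nu}$ in its lower pair. Using this, the first two of the three terms produced above coincide after relabelling the summation indices, and I would relabel all three so that each appears as a coefficient of the common factor $\mnabla_\kappa\delta g_{\mu\nu}$. Since $\mnabla_\kappa\delta g_{\mu\nu}$ is symmetric in $\mu\nu$, only the $\mu\nu$-symmetric part of each coefficient survives; collecting the contributions yields precisely the intermediate identity
\begin{equation*}
\mathcal{P}_{\ \ \lambda}^{\mu\nu}\,\delta\mGamma^{\lambda}_{\ \mu\nu} = \mathcal{R}^{\mu\nu\kappa}\, \mnabla_\kappa \delta g_{\mu\nu}\, , \qquad \mathcal{R}^{\mu\nu\kappa} = \tfrac12\left(\mathcal{P}^{\kappa\mu\nu}+\mathcal{P}^{\kappa\nu\mu}-\mathcal{P}^{\mu\nu\kappa}\right)\, ,
\end{equation*}
which reproduces the definition~\eqref{def: calR} of $\mathcal{R}$.

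The final step is to convert the term $\mathcal{R}^{\mu\nu\kappa}\,\mnabla_\kappa\delta g_{\mu\nu}$ into a divergence. The Leibniz rule gives $\mnabla_\kappa\!\left(\mathcal{R}^{\mu\nu\kappa}\delta g_{\mu\nu}\right) = \left(\mnabla_\kappa\mathcal{R}^{\mu\nu\kappa}\right)\delta g_{\mu\nu} + \mathcal{R}^{\mu\nu\kappa}\,\mnabla_\kappa\delta g_{\mu\nu}$. Moreover $\mathcal{R}^{\mu\nu\kappa}\delta g_{\mu\nu}$ is a vector density of weight one in the free index $\kappa$ (because $\mathcal{P}$, hence $\mathcal{R}$, is a weight-one density, while $\delta g_{\mu\nu}$ is an ordinary tensor), so its metric divergence equals its ordinary divergence, $\mnabla_\kappa\!\left(\mathcal{R}^{\mu\nu\kappa}\delta g_{\mu\nu}\right)=\partial_\kappa\!\left(\mathcal{R}^{\mu\nu\kappa}\delta g_{\mu\nu}\right)$; this follows from the explicit cancellation of the two connection-trace terms carried by the density weight and by the summed index $\kappa$. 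Combining the two relations eliminates $\mathcal{R}^{\mu\nu\kappa}\,\mnabla_\kappa\delta g_{\mu\nu}$ in favour of $\partial_\kappa\!\left(\mathcal{R}^{\mu\nu\kappa}\delta g_{\mu\nu}\right)-\left(\mnabla_\kappa\mathcal{R}^{\mu\nu\kappa}\right)\delta g_{\mu\nu}$, which is exactly the asserted form.

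I expect the main obstacle to be the bookkeeping of the second step: one must be scrupulous about \emph{which} index pair of $\mathcal{P}$ carries the symmetry and about the placement of the raised index, since the precise combination defining $\mathcal{R}$ — in particular the asymmetric appearance of $\mathcal{P}^{\kappa\mu\nu}$ versus $\mathcal{P}^{\mu\nu\kappa}$ — hinges on getting these conventions right; a careless raising would spoil the coefficients. The divergence identity of the third step is standard but, being the conceptual heart of why the boundary term is an honest $\partial_\kappa(\cdots)$ rather than a covariant divergence, deserves to be justified explicitly rather than asserted.
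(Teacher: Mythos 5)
Your proof is correct, and it reaches the identity by a genuinely cleaner route than the paper's own Appendix~\ref{proof lemma varGamma}. The paper expands $\delta \mGamma^{\lambda}_{\ \mu\nu}$ with bare partial derivatives of $g_{\mu\nu}$, integrates by parts with $\partial_\kappa$, and only at the end verifies (by an explicit Christoffel-symbol computation of $\mnabla_\kappa \mathcal{R}^{\mu\nu\kappa}$) that the leftover bulk terms reassemble into a covariant divergence. You instead invoke the covariant Palatini identity $\delta \mGamma^{\lambda}_{\ \mu\nu} = \tfrac12 g^{\lambda\sigma}(\mnabla_\mu \delta g_{\nu\sigma} + \mnabla_\nu \delta g_{\mu\sigma} - \mnabla_\sigma \delta g_{\mu\nu})$ at the outset, so that every intermediate expression is manifestly tensorial; the contraction with $\mathcal{P}_{\ \ \lambda}^{\mu\nu}$ then yields $\mathcal{R}^{\mu\nu\kappa}\,\mnabla_\kappa\delta g_{\mu\nu}$ with exactly the combination~\eqref{def: calR} (your index bookkeeping is right: the symmetry of $\mathcal{P}$ in its upper pair and of $\delta g_{\mu\nu}$ turn the three Palatini terms into $\tfrac12(\mathcal{P}^{\kappa\mu\nu}+\mathcal{P}^{\kappa\nu\mu}-\mathcal{P}^{\mu\nu\kappa})$), and the Leibniz rule plus the cancellation of the two connection traces in the divergence of a weight-one vector density finishes the job. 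What your approach buys is transparency --- no Christoffel symbols ever appear explicitly and the covariance of the bulk term is automatic rather than checked a posteriori; what the paper's approach buys is that it produces the boundary term directly as an ordinary $\partial_\kappa(\cdots)$ without needing the density-weight argument, which is the one point you rightly flag as deserving explicit justification.
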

The proof is given in {Appendix~\ref{proof lemma varGamma}}. Hence, formula \eqref{var Lag matt 0}, generating field dynamics, can be rewritten as:
\begin{eqnarray}
    \delta {\cal L}_{matt}  &=&\left(
    \frac {\partial {\cal L}_{matt}}{\partial g_{\mu\nu}} - \mnabla_{\kappa}\mathcal{R}^{\mu\nu\kappa} \right)\,  \delta g_{\mu\nu}  + \partial_{\kappa} \left( {\cal R}^{\mu\nu\kappa}\, \delta g_{\mu\nu} + p^{\kappa} \delta \phi \right) \, .
    \label{var Lag matt 10}
\end{eqnarray}

\

As mentioned already in the Introduction (cf.~formula \eqref{senm}), the derivative of the matter Lagrangian with respect to the metric in formula \eqref{var Lag matt 0}, has been replaced in the volume part of formula \eqref{var Lag matt 10} by its variational derivative:
\begin{eqnarray}
  \frac {\delta \Lag_{matt}(\phi , \partial \phi , g, \partial g)}{\delta g_{\mu\nu}} &=&
    \frac {\partial \Lag_{matt} }{\partial g_{\mu\nu}} -\partial_\kappa \frac {\partial \Lag_{matt} }{\partial g_{\mu\nu , \kappa}}
    =
    \frac {\partial {\cal L}_{matt} }{\partial g_{\mu\nu}} \ -
   \mnabla_{\kappa}\mathcal{R}^{\mu\nu\kappa}
   \label{comp}
  \, .
\end{eqnarray}
The price for this change is the new term $\partial_{\kappa} \left( {\cal R}^{\mu\nu\kappa}\, \delta g_{\mu\nu} \right)$ arising in the boundary part of the formula. Adding now the variation of the Hilbert Lagrangian given by \eqref{varG},  we obtain the following, universal formula for the variation of the total, metric Lagrangian density ${\cal L}_g= {\cal L}_H + {\cal L}_{matt}$:
\begin{eqnarray}\label{Lg_gen_u}
    \delta {\cal L}_{g} (\phi ,\, \partial \phi ,\, g,\, \partial g,\, \partial^2 g)&=&  \left[\frac {\partial {\cal L}_{matt}}{\partial g_{\mu\nu}}  -\frac{1}{16\pi}\stackrel{\circ \ \ }{\mathcal{G}^{\mu\nu}} -
   \mnabla_{\kappa}\mathcal{R}^{\mu\nu\kappa} \right] \delta g_{\mu\nu} +
   \nonumber \\
   && + \partial_{\kappa} \left( {\cal R}^{\mu\nu\kappa}\, \delta g_{\mu\nu} +  p^{\kappa} \, \delta \phi  +
    {\pi}_{\lambda}^{\ \mu\nu\kappa} \delta
    \mGamma^{\lambda}_{\ \mu\nu} \right) \, ,
\end{eqnarray}
(we have put the circle above the Einstein tensor density $\kolo{\mathcal G}$, in order to stress that it is calculated for the Levi-Civita metric connection $\mGamma$.)

Above formula splits into the “volume part”, describing  Euler-Lagrange field equations (equivalent to the standard metric Einstein equations):
\begin{eqnarray}\label{E-nm}
 \frac {\delta {\cal L}_{matt}}{\delta g_{\mu\nu}}=
 \frac {\partial {\cal L}_{matt}}{\partial g_{\mu\nu}}  -\frac{1}{16\pi}\stackrel{\circ \ \ }{\mathcal{G}^{\mu\nu}} -
   \mnabla_{\kappa}\mathcal{R}^{\mu\nu\kappa}    =0 \, ,
\end{eqnarray}
and the “boundary part”, describing the “\textit{on shell}" (i.e. when the field equations {\em are satisfied}) -- variation of $\Lag_g$:
\begin{eqnarray} \label{var Lag g 1}
    \delta {\cal L}_{g}   &=&   \partial_{\kappa} \left( {\cal R}^{\mu\nu\kappa}\, \delta g_{\mu\nu} +  p^{\kappa} \, \delta \phi  +
    {\pi}_{\lambda}^{\ \mu\nu\kappa}\,\delta
    \mGamma^{\lambda}_{\ \mu\nu} \right) \, .
\end{eqnarray}

We show in Appendix \ref{proof R} that the troublesome term   $\mnabla_{\kappa}\mathcal{R}^{\mu\nu\kappa}$ combines, together with the metric Einstein-tensor-density $\stackrel{\circ \ \ }{\mathcal{G}^{\mu\nu}}$, to its non-metric analogue ${\mathcal{G}^{\mu\nu}}$ (plus an extra term, which also finds a nice geometric interpretation and is discussed in the sequel).

\

Unfortunately, above formulae are not yet fully satisfactory because the metric $g$ appears here in a double role: as a control parameter ($\delta g_{\mu\nu})$ and as the response parameter ($\pi_{\lambda}^{\ \mu\nu\kappa} $). Mathematically, this means that the underlying symplectic structure is degenerate: some combinations of the “momenta” $\pi^{\mu\nu}$ (response) are equal to some combinations of the “configuration” $g_{\mu\nu}$ (control). Now, we are going to reduce this degeneracy, rewriting the formula in terms of independent parameters. First, we rewrite the term $\partial_{\kappa} \left( {\cal R}^{\mu\nu\kappa}\, \delta g_{\mu\nu}\right)$ according to the following lemma:
\begin{lemma}
\label{lemma non-metricity}
The following identity holds:
\begin{equation}\label{tozsam}
    \partial_{\kappa} \left( {\cal R}^{\mu\nu\kappa}\, \delta g_{\mu\nu} \right)  =\partial_{\nu}
    \left[ \pi_{\kappa}^{\ \lambda \mu\nu}\,  \delta  N^{\kappa}_{\ \lambda \mu}  \right]  + \delta \left[ \mnabla_\kappa  \mathcal{R}_{\sigma}^{\ \sigma\kappa} \right] \, ,
\end{equation}
where
\begin{eqnarray}
    N^{\kappa}_{\ \lambda \mu} &:= &\frac{16 \pi}{\sqrt{|\det g|}}\, \left[{\cal R}_{\lambda \mu}^{\ \ \ \kappa} -    \frac 12\,  {\cal R}_{\sigma}^{\ \ \sigma \kappa} \, g_{\lambda \mu}  - \frac 23\,  \left(\delta^\kappa_{(\lambda }\,{\cal R}_{\mu)\sigma}^{\ \ \ \sigma} - \frac 12\,{\cal R}^{\sigma}_{\ \sigma(\lambda}\, \delta^\kappa_{\mu)} \right) \right] \, , \ \
\label{def: tensor N obraz metryczny}
\end{eqnarray}
is simply equal to the non-metricity \eqref{nonmetricity}, rewritten in terms of the auxiliary quantity ${\cal R}$ defined by \eqref{def: calR}, instead of the original object~${\cal P}$.
\end{lemma}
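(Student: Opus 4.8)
The plan is to reduce the stated identity to a derivative-free (pointwise) identity between vector densities and then recover the Lemma by taking a single divergence. The only genuine derivative manipulation needed concerns the last term: since ${\cal R}_{\sigma}^{\ \sigma\kappa}$ is a weight-one vector density, its covariant $\mGamma$-divergence coincides with the ordinary one, $\mnabla_{\kappa}{\cal R}_{\sigma}^{\ \sigma\kappa}=\partial_{\kappa}{\cal R}_{\sigma}^{\ \sigma\kappa}$, and since $\delta$ commutes with $\partial_{\kappa}$ we get $\delta[\mnabla_{\kappa}{\cal R}_{\sigma}^{\ \sigma\kappa}]=\partial_{\kappa}(\delta{\cal R}_{\sigma}^{\ \sigma\kappa})$. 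Renaming the free index uniformly to $\rho$, the whole assertion takes the form $\partial_{\rho}\bigl[\,{\cal R}^{\mu\nu\rho}\delta g_{\mu\nu}-\pi_{\kappa}^{\ \lambda\mu\rho}\delta N^{\kappa}_{\ \lambda\mu}-\delta{\cal R}_{\sigma}^{\ \sigma\rho}\,\bigr]=0$. It therefore suffices to prove the stronger, underived identity
\begin{equation*}
  \pi_{\kappa}^{\ \lambda\mu\rho}\,\delta N^{\kappa}_{\ \lambda\mu}
  = {\cal R}^{\mu\nu\rho}\,\delta g_{\mu\nu}-\delta{\cal R}_{\sigma}^{\ \sigma\rho}\,,\tag{$\ast$}
\end{equation*}
the Lemma following by applying $\partial_{\rho}$ to both sides.

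Next I would split $(\ast)$ by writing $\delta\bigl(\pi_{\kappa}^{\ \lambda\mu\rho}N^{\kappa}_{\ \lambda\mu}\bigr)=(\delta\pi_{\kappa}^{\ \lambda\mu\rho})N^{\kappa}_{\ \lambda\mu}+\pi_{\kappa}^{\ \lambda\mu\rho}\delta N^{\kappa}_{\ \lambda\mu}$, turning $(\ast)$ into the variation of a single algebraic contraction. The first ingredient is the purely algebraic identity
\begin{equation*}
  \pi_{\kappa}^{\ \lambda\mu\rho}\,N^{\kappa}_{\ \lambda\mu}=-\,{\cal R}_{\sigma}^{\ \sigma\rho}\,.\tag{I}
\end{equation*}
I would prove (I) directly: inserting $\pi_{\kappa}^{\ \lambda\mu\rho}=\pi^{\lambda\mu}\delta^{\rho}_{\kappa}-\delta_{\kappa}^{(\lambda}\pi^{\mu)\rho}$ from~\eqref{pi4} gives $\pi_{\kappa}^{\ \lambda\mu\rho}N^{\kappa}_{\ \lambda\mu}=\pi^{\lambda\mu}N^{\rho}_{\ \lambda\mu}-\pi^{\mu\rho}N^{\lambda}_{\ \lambda\mu}$ (the two half-trace terms coinciding by the lower-index symmetry of $N$), whereas contracting the linear relation~\eqref{N=lincalP}--\eqref{def: calR} between ${\cal P}$, ${\cal R}$ and $N$ produces exactly the negative of this combination. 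Taking $\delta$ of (I) and using it once more then recasts $(\ast)$ into the equivalent ``metric-variation'' identity
\begin{equation*}
  -\bigl(\delta\pi_{\kappa}^{\ \lambda\mu\rho}\bigr)\,N^{\kappa}_{\ \lambda\mu}
  ={\cal R}^{\mu\nu\rho}\,\delta g_{\mu\nu}\,.\tag{II}
\end{equation*}

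Finally I would prove (II). From~\eqref{pi2} and~\eqref{d_detg} one has $\delta\pi^{\mu\nu}=\pi^{\mu\nu}\,\frac12 g^{\alpha\beta}\delta g_{\alpha\beta}-\frac{\sqrt{|\det g|}}{16\pi}\,g^{\mu\alpha}g^{\nu\beta}\delta g_{\alpha\beta}$. Substituting this into $\delta\pi_{\kappa}^{\ \lambda\mu\rho}$ splits the left side of (II) into a pure-trace part proportional to $g^{\alpha\beta}\delta g_{\alpha\beta}$, which recombines into $\frac12 g^{\alpha\beta}\delta g_{\alpha\beta}\,{\cal R}_{\sigma}^{\ \sigma\rho}$ by a second use of (I), and a remaining part $\frac{\sqrt{|\det g|}}{16\pi}\bigl[N^{\rho\alpha\beta}-t^{(\alpha}g^{\beta)\rho}\bigr]\delta g_{\alpha\beta}$, where $N^{\rho\alpha\beta}:=g^{\lambda\alpha}g^{\mu\beta}N^{\rho}_{\ \lambda\mu}$ and $t_{\alpha}:=N^{\lambda}_{\ \lambda\alpha}$. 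Matching the coefficient of the symmetric $\delta g_{\alpha\beta}$ reduces (II) to the tensor identity expressing ${\cal R}^{\alpha\beta\rho}$ through $N$ and its traces; this is precisely the inversion~\eqref{nonmetricity}/\eqref{def: tensor N obraz metryczny} read backwards, hence valid by the very construction of $N$ (and its $g_{\alpha\beta}$-trace is consistent with (I)). Assembling (I), its variation, and (II) yields $(\ast)$, and $\partial_{\rho}(\ast)$ is the assertion.

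I expect the real obstacle to lie entirely in this $\delta g$-sector, i.e. in (II): one must scrupulously track the density-weight (trace) contributions hidden in $\delta\pi_{\kappa}^{\ \lambda\mu\rho}$ and verify that feeding the explicit $N[{\cal R}]$ of~\eqref{def: tensor N obraz metryczny} back in reproduces the full ${\cal R}^{\alpha\beta\rho}$, not merely its trace. The cancellation of the pure-trace terms through identity (I) is the delicate bookkeeping step; everything else is constant-coefficient linear algebra with the $\pi$-dependent operator of~\eqref{N=lincalP}.
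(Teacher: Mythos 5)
Your proposal is correct, and its skeleton coincides with the paper's proof in Appendix~B: both establish a pointwise identity between vector densities, namely ${\cal R}^{\mu\nu\rho}\,\delta g_{\mu\nu}-\pi_{\kappa}{}^{\lambda\mu\rho}\,\delta N^{\kappa}{}_{\lambda\mu}=\delta {\cal R}_{\sigma}{}^{\sigma\rho}$, then take a single divergence and use the weight-one density property to replace $\partial_{\kappa}$ by $\mnabla_{\kappa}$ on the trace ${\cal R}_{\sigma}{}^{\sigma\kappa}$. Where you differ is in how that pointwise identity is produced: the paper first trades $\delta g_{\mu\nu}$ for $\delta\pi^{\alpha\beta}$ and integrates by parts in $\delta$ on the ${\cal R}$-built coefficient, ending with the unproved remark that ``it remains to check'' that the leftover term equals $\pi_{\kappa}{}^{\lambda\mu\nu}\delta N^{\kappa}{}_{\lambda\mu}$; you instead apply the Leibniz rule to the contraction $\pi_{\kappa}{}^{\lambda\mu\rho}N^{\kappa}{}_{\lambda\mu}$ and isolate two purely algebraic facts, (I) $\pi_{\kappa}{}^{\lambda\mu\rho}N^{\kappa}{}_{\lambda\mu}=-{\cal R}_{\sigma}{}^{\sigma\rho}$ and (II) the coefficient identity ${\cal R}^{\alpha\beta\rho}=\frac 12\,{\cal R}_{\sigma}{}^{\sigma\rho}\,g^{\alpha\beta}+\frac{\sqrt{|\det g|}}{16\pi}\bigl(N^{\rho\alpha\beta}-t^{(\alpha}g^{\beta)\rho}\bigr)$ with $t_{\alpha}:=N^{\lambda}{}_{\lambda\alpha}$. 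I have checked both against~\eqref{N=lincalP} and~\eqref{def: calR}: contracting~\eqref{N=lincalP} gives $g^{\lambda\mu}N^{\rho}{}_{\lambda\mu}-t^{\rho}=-\frac{16\pi}{\sqrt{|\det g|}}\,{\cal R}_{\sigma}{}^{\sigma\rho}$, which is (I), and inverting~\eqref{N=lincalP} for ${\cal R}$ in terms of $N$ yields exactly (II), so the argument closes. What your organization buys is cleaner bookkeeping: (I) exhibits explicitly the total variation that the paper only recognizes implicitly, and (II) is precisely the content of the paper's unproved final step, so nothing is swept under the rug. The one point you should not leave at ``valid by the very construction of $N$'' is (II) itself: it requires actually carrying out the inversion of the linear map~\eqref{N=lincalP}, and you should derive it from~\eqref{N=lincalP} directly rather than quote the ${\cal R}[N]$ formula displayed in Section~6 of the paper, whose pure-trace term $\frac 12\bigl(N^{\kappa\sigma}{}_{\sigma}-N_{\sigma}{}^{\sigma\kappa}\bigr)g^{\mu\nu}$ comes out with the opposite sign to what (I) and (II) jointly require.
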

Proof of this Lemma is presented in Appendix \ref{proof lemma non-metricity}. Using this identity, we rewrite the variation \eqref{var Lag g 1} of the metric Lagrangian ${\cal L}_g$ as follows:
\begin{eqnarray}
    \delta {\cal L}_{g}   &=&   \partial_{\kappa} \left[  p^{\kappa} \, \delta \phi  +
    {\pi}_{\lambda}^{\ \mu\nu\kappa} \delta \left(
    \mGamma^{\lambda}_{\ \mu\nu}+N^{\lambda}_{\ \mu\nu} \right)\right]+  \delta \left[ \mnabla_\kappa  \mathcal{R}_{\sigma}^{\ \sigma\kappa} \right] \,. \label{var Lag g 2}
\end{eqnarray}
We see, that our non-metric connection $\Gamma =\ \mGamma+N$, already defined in \eqref{Gamma-gen}, arises here naturally:
\begin{eqnarray} \label{var Lag g 3}
    \delta {\cal L}_{g} &=&   \partial_{\kappa} \left[  p^{\kappa} \, \delta \phi  +
    {\pi}_{\lambda}^{\ \mu\nu\kappa} \, \delta \Gamma^{\lambda}_{\ \mu\nu} \right]
    + \delta \left[ \mnabla_\kappa  \mathcal{R}_{\sigma}^{\ \sigma\kappa} \right]\, .
\end{eqnarray}

Now, we perform the Legendre transformation between the metric and the connection. As the first step, we put the last term (the complete variation) on the left-hand side and obtain finally the universal affine Lagrangian:
\begin{equation}\label{Pala-def}
    \Lag_A:=\Lag_g -  \mnabla_\kappa  \mathcal{R}_{\sigma}^{\ \sigma\kappa}   \, ,
\end{equation}
where now only the symmetric (but not necessarily metric!) connection $\Gamma$ and the matter fields $\phi$ play the role of independent configuration variables:
\begin{eqnarray} \label{var Lag P 1}
    \delta {\cal L}_{A}  &=&   \partial_{\kappa} \left[  p^{\kappa} \, \delta \phi  +
    {\pi}_{\lambda}^{\ \mu\nu\kappa}\, \delta \Gamma^{\lambda}_{\ \mu\nu} \right]= \\
    &=& \left(\partial_\kappa p^\kappa \right) \, \delta \phi  +
    p^\kappa\, \delta \phi_\kappa + \left( \partial_\kappa {\pi}_{\lambda}^{\ \mu\nu\kappa}
    \right)\, \delta \Gamma^{\lambda}_{\ \mu\nu} + \pi_{\lambda}^{\ \mu\nu\kappa}\,  \delta
    \Gamma^{\lambda}_{\ \mu\nu,\kappa}
    \, . \label{var Lag P 12}
\end{eqnarray}
The metric $g_{\mu\nu}$ (represented here by the momentum $\pi^{\mu\nu}$) and its derivatives (represented by $\partial_\kappa {\pi}_{\lambda}^{\ \mu\nu\kappa}$) are now shifted to the level of “response parameters”: they arise as derivatives of the new affine Lagrangian with respect to $\Gamma^{\lambda}_{\ \mu\nu}$ and its derivatives $\Gamma^{\lambda}_{\ \mu\nu,\kappa}$. But, to be consistent, the second step of the Legendre transformation must follow the first one: the function ${\cal L}_A$ must be expressed in terms of the new control parameters, whereas the old control parameters must be eliminated with the help of the appropriate field equations. We have an analogous situation in thermodynamics where, to perform transition from the adiabatic to the thermostatic mode, it is not sufficient to replace formula $\delta U(V, S) = -p \delta V + T \delta S$ by $\delta \left( U - TS \right) =- p\delta V - S \delta T$, but the new generating function $H := U - TS$ (the Helmholtz “free energy”) must be expressed in terms of the new control parameters: $H = H(V,T)$.
A similar phenomenon occurs during any Legendre transformation, which we commonly use in theoretical physics. For example, formula
\begin{equation}\label{Ham}
    H(p,q,\dot{q}) = p  \, \dot{q} - L(q,\dot{q})\, , \qquad
    p:= \frac{\partial L}{\partial \dot{q}} \, ,
\end{equation}
describing transformation from the Lagrangian control mode to the Hamiltonian one in classical mechanics means, that we have to eliminate the Lagrangian variable $\dot{q}$ and express it in terms of the “true Hamiltonian control parameters” $(p,q)$.
Mathematically rigorous definition of these geometric structures can be found in \cite{Tulcz}, \cite{Kij-Moreno2015}. We stress, however, that no assumption about the non-degeneracy of the original Lagrangian density is necessary here: theories with constraints, where only a subspace of control parameters is physically accessible and, consequently, the “control -- response” relation is no longer equivocal, (i.e.: non-physical “gauge parameters” arise), fit perfectly into this framework.

\

To perform the second step of the Legendre transformation between the metric and the affine picture in the gravity theory, we observe that the numerical value of the affine Lagrangian ${\cal L}_A$ equals:
\begin{eqnarray}
    \Lag_A&:=& \Lag_g -  \mnabla_\kappa  \mathcal{R}_{\sigma}^{\ \sigma\kappa}   =
    \Lag_{matt}(\phi , \partial \phi , g,\mGamma) + \Lag_{H}(g_{\mu\nu},\kolo{R}_{\mu\nu}) - \mnabla_\kappa  \mathcal{R}_{\sigma}^{\ \sigma\kappa}   \, . \label{LA-1}
\end{eqnarray}
Hence, we have {\em a priori}:
\begin{equation}\label{LA-2}
    \Lag_A = \Lag_A (\phi ,\, \partial \phi ,\, g_{\mu\nu},\, \partial_\kappa g_{\mu\nu},\,  \Gamma^{\lambda}_{\ \mu\nu},\, R_{\mu\nu}) \, .
\end{equation}
But now, the variables $(g_{\mu\nu},\, \partial_\kappa g_{\mu\nu})$ are no longer “control parameters” -- see variational formula \eqref{var Lag P 1} -- and have to be eliminated by virtue of dynamical equations. For this purpose, we can use two equations which enable us to reduce the number of control parameters: the non-metricity equation \eqref{non-m} and the Einstein equation \eqref{E-nm}:
\begin{eqnarray}\label{aaa}
 \nabla_\kappa {\pi}_{\lambda}^{\ \mu\nu\kappa}
    &=&
    {\cal P}_{\ \ \lambda}^{\mu\nu} := \frac {\partial {\cal L}_{matt}}
    {\partial \mGamma^{\lambda}_{\ \mu\nu} }\, ,  \label{Leg-m-a1}\\
    \frac{1}{16\pi}\stackrel{\circ \ \ }{\mathcal{G}^{\mu\nu}} &=&\frac {\partial {\cal L}_{matt}}{\partial g_{\mu\nu}}
    - \mnabla_{\kappa}\mathcal{R}^{\mu\nu\kappa}\, .\label{Leg-m-a2}
\end{eqnarray}

The complete Legendre transformation from the metric picture (“control mode”) to the affine picture is, therefore, given by definition \eqref{LA-1} of the affine Lagrangian density, together with formulas \eqref{Leg-m-a1} and \eqref{Leg-m-a2}, the latter being necessary to express “response parameters” $(g ,\, \mGamma )$ in terms of the new control parameters: ($\Gamma,\, \partial\Gamma$).

As will be seen in the Appendix E, derivatives of the connection enter here {\em via} the Ricci tensor $R_{\mu\nu}$ only:
\begin{eqnarray}
    R_{\mu \nu} :=  -\Gamma^{\kappa}_{\ \kappa\mu, \nu} + \Gamma^{\kappa}_{\ \mu \nu, \kappa}-\Gamma^{\sigma}_{\  \kappa\mu}\, \Gamma^{\kappa}_{\ \nu \sigma} + \Gamma^{\sigma}_{\ \mu \nu}\, \Gamma^{\kappa}_{\ \kappa \sigma}\, . \quad
    \label{def: tensor Ricciego}
\end{eqnarray}
and even less, namely {\em via} its symmetric part (Ricci tensor of a non-metric connection may contain a skew-symmetric part!):
\begin{eqnarray}
 K_{\mu\nu}&:=& R_{(\mu\nu)}=-\Gamma^{\kappa}_{\ \kappa(\mu, \nu)}+\Gamma^{\kappa}_{\ \mu  \nu, \kappa}  -\Gamma^{\sigma}_{\  \kappa\mu}\, \Gamma^{\kappa}_{\ \nu \sigma} + \Gamma^{\sigma}_{\ \mu \nu}\, \Gamma^{\kappa}_{\ \kappa \sigma}\, ,
    \label{def: K}
\end{eqnarray}
similarly as it was the case in the metric picture. The formula expressing $K_{\mu\nu}$ in terms of its metric analogue $\kolo{R}_{\mu\nu}$ is given in Appendix \ref{proof R}.

Practically, to solve equations \eqref{Leg-m-a1}, \eqref{Leg-m-a2}, and to express explicitly metric $g$ and the metric connection $\mGamma$ in terms of $K$ and $\Gamma$ (together with the matter field $\phi$) can be computationally difficult (see example in Appendix \ref{example}). But, theoretically, the implicit relation between the metric and the affine control parameters is sufficient. It might well be that “what is simple in the metric picture, is complicated in the affine one and {\em vice versa}”. The affine picture being conceptually much simpler that the metric picture, we are deeply convinced that when looking for fundamental laws of nature (e.g.~description of the dark matter) one should begin with geometric structures which are simple rather in affine picture, even if their metric counterparts would be computationally complicated. From this point of view, attempts to describe dark matter by modifying slightly the metric Hilbert Lagrangian density do not look very convincing. However, there are important examples of the matter fields, where the above Legendre transformation can be performed explicitly. For the reader's convenience, we list below a few examples:
\begin{itemize}
  \item
  \begin{equation}\label{cosm}
    \Lag_A(K_{\mu\nu}) = C \cdot \sqrt{|\det K_{\mu\nu}|} \, .
  \end{equation}
  The resulting field equations are the Einstein equations for empty space with cosmological constant $\Lambda:= \frac 1{8\pi\,  C}$.
    \item
  \begin{eqnarray}
 \Lag_A(K_{\mu\nu}, \phi, \partial_{\alpha} \phi) &=&  \frac{2}{m^2 \phi^2} \sqrt{\left|\det
\left(\frac{1}{8\pi} K_{\mu\nu} - \phi_{,\mu} \phi_{,\nu}\right)\right|} \, , \label{eq:exlaff}
  \end{eqnarray}
describes the Klein-Gordon-Einstein theory of the self-gravitating scalar field $\phi$ with mass $m$.
  \item
  \begin{eqnarray}
    \Lag_A(K_{\mu\nu},\partial_{\alpha} A_{\beta}) &= & - \frac 14\, \sqrt{|\det K_{\rho\sigma}|}\, (K^{-1})^{\mu\nu}\, (K^{-1})^{\alpha\beta}\, F_{\mu\alpha}\, F_{\nu\beta} \, , \label{eq:E-M}
  \end{eqnarray}
where $F_{\mu\nu} := \partial_\mu A_\nu - \partial_\nu A_\mu$ is the electromagnetic field (Faraday tensor) given in terms of its four-potential $A_\mu$, describes the Maxwell-Einstein theory.

\end{itemize}

For more examples see, e.g.,  \cite{kij-magli1997}, \cite{kij-magli1998}, \cite{kij-werp2007}.

\section{Transition to the universal Palatini picture}\label{univ-P}

In this section, we introduce the universal “Palatini picture”, where the configuration fields are: the metric tensor $g$ and the symmetric connection $\Gamma$. In contrast to the “naive” Palatini picture, discussed in Section 3, no metricity condition for $\Gamma$ is assumed {\em a priori}. The non-metricity equation \eqref{Leg-m-a1} and the  Einstein equation \eqref{Leg-m-a2} arise here as Euler-Lagrange field equations, implied by variation with respect to $\Gamma$ and to $g$, respectively.

This universal Palatini formulation can be obtained directly {\em via} a simple  Legendre transformation from the affine picture. Let us begin, therefore, with a generic affine Lagrangian
\begin{equation}\label{LA-10}
    \Lag_A = \Lag_A (\phi ,\,  \partial \phi ,\,  \Gamma^{\kappa}_{\ \lambda\mu},\, \Gamma^{\kappa}_{\ \lambda\mu,\nu})
    = \Lag_A(\phi ,\,  \nabla \phi ,\,  K_{\mu\nu})\, ,
\end{equation}
where derivatives of the connection coefficients enter {\em via} the symmetric part
\eqref{def: K} of the Ricci, exclusively. Let us introduce the “coordinate transformation”:
\begin{equation*}
    (\Gamma,\, \partial\Gamma) \mapsto (\Gamma,\, K)
\end{equation*}
in space of control parameters.
\begin{lemma}
\label{lemma5.1}
The following identity holds:
\begin{eqnarray}
 \partial_{\kappa}\left(\pi_{\kappa}^{\ \lambda\mu\nu}\, \delta \Gamma^{\kappa}_{\ \lambda\mu} \right) = \left( \nabla_\kappa {\pi}_{\lambda}^{\ \mu\nu\kappa}
    \right)\, \delta \Gamma^{\lambda}_{\ \mu\nu} + \pi^{\mu\nu} \, \delta
    K_{\mu\nu}\, . \ \
\end{eqnarray}
\end{lemma}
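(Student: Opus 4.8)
The plan is to recognize Lemma~\ref{lemma5.1} as nothing but the ``connection part'' of the universal variational identity~\eqref{variation_full_pi}, read off at fixed metric. I read the subscript/superscript labels of the $\pi$ and $\delta\Gamma$ appearing on the stated left-hand side in the only index arrangement that makes both sides scalar densities, namely as the total divergence $\partial_\kappa\left(\pi_\lambda^{\ \mu\nu\kappa}\,\delta\Gamma^\lambda_{\ \mu\nu}\right)$. First I would start from identity~\eqref{variation_full_pi},
\begin{equation*}
\delta\mathcal{L}_H = R_{\mu\nu}\,\delta\pi^{\mu\nu} - \left(\nabla_\kappa\pi_\lambda^{\ \mu\nu\kappa}\right)\delta\Gamma^\lambda_{\ \mu\nu} + \partial_\kappa\left(\pi_\lambda^{\ \mu\nu\kappa}\,\delta\Gamma^\lambda_{\ \mu\nu}\right)\, ,
\end{equation*}
which the excerpt establishes for an arbitrary metric and an arbitrary symmetric connection, and for arbitrary independent variations $\delta\pi^{\mu\nu}$ and $\delta\Gamma^\lambda_{\ \mu\nu}$.

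Second, I would compute the very same $\delta\mathcal{L}_H$ directly from the algebraic form $\mathcal{L}_H=\pi^{\mu\nu}R_{\mu\nu}$ of~\eqref{Lag H}. By the Leibniz rule, $\delta\mathcal{L}_H = R_{\mu\nu}\,\delta\pi^{\mu\nu}+\pi^{\mu\nu}\,\delta R_{\mu\nu}$. Subtracting the common term $R_{\mu\nu}\,\delta\pi^{\mu\nu}$ from the two expressions for $\delta\mathcal{L}_H$ isolates the pure-connection identity
\begin{equation*}
\pi^{\mu\nu}\,\delta R_{\mu\nu} = -\left(\nabla_\kappa\pi_\lambda^{\ \mu\nu\kappa}\right)\delta\Gamma^\lambda_{\ \mu\nu} + \partial_\kappa\left(\pi_\lambda^{\ \mu\nu\kappa}\,\delta\Gamma^\lambda_{\ \mu\nu}\right)\, .
\end{equation*}

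Third, I would replace $\delta R_{\mu\nu}$ by $\delta K_{\mu\nu}$. Since $\pi^{\mu\nu}$ is symmetric, its contraction with $R_{\mu\nu}$ sees only the symmetric part $K_{\mu\nu}=R_{(\mu\nu)}$ of~\eqref{def: K}, so $\pi^{\mu\nu}\,\delta R_{\mu\nu}=\pi^{\mu\nu}\,\delta R_{(\mu\nu)}=\pi^{\mu\nu}\,\delta K_{\mu\nu}$. Transposing the covariant-divergence term to the other side then yields exactly the asserted identity. As this route invokes only results already proved in the excerpt, the argument is complete modulo the elementary symmetric-part remark; there is no genuine analytic obstacle.

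Finally, I would record the self-contained alternative, in case a derivation independent of~\eqref{variation_full_pi} is preferred. Expand the left-hand side by Leibniz as $(\partial_\kappa\pi_\lambda^{\ \mu\nu\kappa})\,\delta\Gamma^\lambda_{\ \mu\nu}+\pi_\lambda^{\ \mu\nu\kappa}\,\delta\Gamma^\lambda_{\ \mu\nu,\kappa}$, using $\partial_\kappa\delta=\delta\partial_\kappa$; then trade $\partial_\kappa\pi$ for $\nabla_\kappa\pi$ and observe that the density-weight term cancels the contracted-index term by symmetry of $\Gamma$, so that $(\nabla_\kappa-\partial_\kappa)\pi_\lambda^{\ \mu\nu\kappa}$ reduces to the three surviving $\Gamma\cdot\pi$ contractions. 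The main (and only mildly tedious) step is then to check, using the explicit $\pi_\lambda^{\ \mu\nu\kappa}=\pi^{\mu\nu}\delta^\kappa_\lambda-\delta_\lambda^{(\mu}\pi^{\nu)\kappa}$ from~\eqref{pi4} together with the Palatini expansion of $\delta R_{\mu\nu}$, that these $\Gamma\cdot\pi$ terms combine with $\pi_\lambda^{\ \mu\nu\kappa}\,\delta\Gamma^\lambda_{\ \mu\nu,\kappa}$ to reassemble $\pi^{\mu\nu}\,\delta K_{\mu\nu}$. The first route bypasses this bookkeeping entirely, and I would present it as the proof.
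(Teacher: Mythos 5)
Your argument is correct, and it reaches the identity by a genuinely different route than the paper. The paper's own proof (Appendix~\ref{proof th 6.2}) is a self-contained index computation: it establishes the more general Theorem~\ref{th 6.2} for an arbitrary momentum $\mathcal{P}_{\kappa}^{\ \lambda\mu\nu}$ satisfying the symmetries~\eqref{idenP}, using the alternative curvature tensor $K^{\kappa}_{\ \lambda\mu\nu}$ of~\eqref{krzywizna kijowskiego 1}, an inner lemma that reassembles the $\partial\mathcal{P}$ and $\Gamma\cdot\Gamma$ terms into $\nabla_{\nu}\mathcal{P}_{\kappa}^{\ \lambda\mu\nu}$, and the irreducible decomposition~\eqref{krzywizna kijowskiego 2}; Lemma~\ref{lemma5.1} is then the specialization $\mathcal{P}_{\kappa}^{\ \lambda\mu\nu}=\pi_{\kappa}^{\ \lambda\mu\nu}$, for which the $\chi$ and $\Omega$ contributions drop out. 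You instead subtract the Leibniz expansion of $\delta(\pi^{\mu\nu}R_{\mu\nu})$ from the quoted identity~\eqref{variation_full_pi} and invoke the symmetry of $\pi^{\mu\nu}$ to trade $\delta R_{\mu\nu}$ for $\delta K_{\mu\nu}$; you also correctly repair the obvious index misprint on the stated left-hand side (the summation label $\kappa$ appears three times; it should read $\partial_{\nu}\bigl(\pi_{\kappa}^{\ \lambda\mu\nu}\,\delta\Gamma^{\kappa}_{\ \lambda\mu}\bigr)$, equivalently $\partial_{\kappa}\bigl(\pi_{\lambda}^{\ \mu\nu\kappa}\,\delta\Gamma^{\lambda}_{\ \mu\nu}\bigr)$). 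Your route is shorter and eliminates all bookkeeping, at two costs worth noting: it rests on~\eqref{variation_full_pi}, which this paper only cites from~\cite{pieszy} rather than proving (so the logical weight is shifted to an external reference, whereas Appendix~\ref{proof th 6.2} in effect reproves that content), and it does not extend to the general Theorem~\ref{th 6.2} with nonvanishing $\chi^{\mu\nu}$ and $\Omega_{\kappa}^{\ \lambda\mu\nu}$, which the paper needs in Section~\ref{afin}. As a proof of Lemma~\ref{lemma5.1} alone, within the logic the paper itself sets up, your argument is complete and valid.
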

The proof of this  identity is given in Appendix \ref{proof th 6.2}. It enables us to rewrite generating formula \eqref{var Lag P 1} as follows:
\begin{eqnarray}\label{var Lag P 2-}
    \delta {\cal L}_{A}   &=&   \partial_{\kappa} \left(  p^{\kappa} \, \delta \phi  \right) + \left(\nabla_{\kappa}{\pi}_{\lambda}^{\ \mu\nu\kappa}\right)\, \delta \Gamma^{\lambda}_{\ \mu\nu}+ \pi^{\mu\nu}\, \delta K_{\mu\nu}=  \\
    &=&
    \partial_{\kappa} \left(  p^{\kappa} \, \delta \phi  \right)  + \left(\nabla_{\kappa}{\pi}_{\lambda}^{\ \mu\nu\kappa}\right)\, \delta \Gamma^{\lambda}_{\ \mu\nu}  - K_{\mu\nu}\,  \delta \pi^{\mu\nu}  + \delta \left( \pi^{\mu\nu}\,  K_{\mu\nu} \right)
    \,.
    \label{var Lag P 2}
\end{eqnarray}
Putting the complete variation $\delta \left( \pi^{\mu\nu}\,  K_{\mu\nu} \right)$ on the left-hand side, we obtain the Universal Palatini Lagrangian:
\begin{eqnarray}
    {\cal L}_P (\phi,\, \partial\phi,\, \Gamma^\lambda_{\ \mu\nu},
     \Gamma^{\kappa}_{\ \lambda\mu,\nu}, g_{\mu\nu}) &:=& {\cal L}_{A} - \pi^{\mu\nu}\, K_{\mu\nu} =  {\cal L}_{A} - \frac {\sqrt{|\det g|}}{16 \pi}\, R\, , \label{Pal8}
\end{eqnarray}
where the subtracted term is equal to the scalar curvature $\pi^{\mu\nu}\,  K_{\mu\nu} = \pi^{\mu\nu}\,  R_{\mu\nu}$ of a general (possibly non-metric) connection, analogous to the Hilbert Lagrangian $\Lag_H(g,\, \Gamma,\, \partial \Gamma)$ in formula~\eqref{Lag H}. Due to \eqref{var Lag P 2}, this Lagrangian  generates field equations according to:
\begin{eqnarray}
    \delta {\cal L}_{P}
    &=& \left(\nabla_{\kappa}{\pi}_{\lambda}^{\ \mu\nu\kappa}\right)\, \delta \Gamma^{\lambda}_{\ \mu\nu}  + \frac 1{16 \pi}\, {\cal G}^{\mu\nu}\,  \delta g_{\mu\nu} +  \partial_{\kappa} \left(  p^{\kappa} \, \delta \phi  \right)  \, , \nonumber  \\
    \label{var Lag P 23}
\end{eqnarray}
(cf.~identity \eqref{R delta pi}) or, equivalently:
\begin{eqnarray}
 \frac{1}{16\pi}\,  {\cal G}^{\mu\nu} = \frac{\partial \Lag_P}{\partial g_{\mu\nu}}\, ,\qquad \qquad
 \nabla_{\kappa}{\pi}_{\lambda}^{\ \mu\nu\kappa}= \frac{\partial \Lag_P}{\partial \Gamma^{\lambda}_{\ \mu\nu}}\, ,
\end{eqnarray}
(cf.~\eqref{Leg-m-a1} and \eqref{Leg-m-a2}).

In case of a special theory \eqref{Lmatt-specific}, where matter Lagrangian density does not depend upon derivatives of the metric tensor (i.e.~upon the metric connection $\mGamma$), the non-metricity of $\Gamma$ vanishes (see Lemmas \ref{lemma varGamma}  and \ref{lemma non-metricity}) and we have $K_{\mu\nu}=\ \kolo{R}_{\mu\nu}$ (cf. Appendix \ref{proof R}). Consequently, definition \eqref{Pal8} implies in that case:
\begin{eqnarray}
    {\cal L}_P &=& {\cal L}_{A} - \pi^{\mu\nu} K_{\mu\nu} = {\cal L}_{matt} + {\cal L}_H(g,\, \partial g,\, \partial^2 g) - \pi^{\mu\nu} \kolo{R}_{\mu\nu} = {\cal L}_{matt} \, ,\label{Palatini-A}
\end{eqnarray}
i.e.~our “universal Palatini picture” reduces to the classical (“naive”) Palatini method of variation (cf. \cite{palatini}), discussed in Section 3. But, in a generic case \eqref{ogo}, the two pictures differ from each other. In particular, we have:
\begin{eqnarray}
    \Lag_P &=&
    \Lag_{matt} -\pi^{\mu\nu}\,\left( N^{\sigma}_{\ \mu \nu}\, N^{\kappa}_{\ \kappa \sigma} - N^{\sigma}_{\ \kappa \mu}\, N^{\kappa}_{\ \nu \sigma}\right)= \nonumber \\
    &=&\Lag_{matt} -
    \frac{\sqrt{|\det g|}}{16 \pi}\, g^{\mu\nu}\,\left( N^{\sigma}_{\ \mu \nu}\, N^{\kappa}_{\ \kappa \sigma} - N^{\sigma}_{\ \kappa \mu}\, N^{\kappa}_{\ \nu \sigma}\right)
    \, ,
    \label{LP-1}
\end{eqnarray}
where formula \eqref{LP-1} is implied by \eqref{LA-1}, \eqref{Pal8} and the following identity, which is an easy consequence of the relation between $K_{\mu\nu}$ and $\kolo{R}_{\mu\nu}$ (see Appendix \ref{proof R}):
\begin{eqnarray}
 \pi^{\mu\nu} \kolo{R}_{\mu\nu} - \pi^{\mu\nu} \, K_{\mu\nu} - \mnabla_\kappa  \mathcal{R}_{\sigma}^{\ \sigma\kappa} = -\pi^{\mu\nu}\left( N^{\sigma}_{\ \mu \nu}\, N^{\kappa}_{\ \kappa \sigma} - N^{\sigma}_{\ \kappa \mu}\, N^{\kappa}_{\ \nu \sigma}\right)\, .
\end{eqnarray}
We see explicitly that:
\begin{equation}\label{KO}
    \frac {\partial {\cal L}_{matt}}{\partial \Gamma^\lambda_{\ \mu\nu}} = 0 \    \Longleftrightarrow
    \    {\cal P}_{\ \ \lambda}^{\mu\nu}= 0  \   \Longleftrightarrow   \   N^\lambda_{\ \mu\nu}= 0 \
    \Longrightarrow \   \Lag_P = \Lag_{matt}\, .
\end{equation}

\subsection{Fast track from the metric picture to the universal Palatini picture}

Formula \eqref{LP-1} provides the method of a direct transition from the metric picture to the universal Palatini picture, without transition {\em via} the affine picture.  The formula gives directly the value of the universal Palatini Lagrangian density $\Lag_P$ as the function of the following variables:
\begin{eqnarray}
   \Lag_P =\Lag_P(g, \phi, \nabla \phi, \Gamma ,N) = \Lag_{matt}(g, \phi, \mnabla \phi, \mGamma) -\pi^{\mu\nu}\,\left( N^{\sigma}_{\ \mu \nu}\, N^{\kappa}_{\ \kappa \sigma} - N^{\sigma}_{\ \kappa \mu}\, N^{\kappa}_{\ \nu \sigma}\right)\, ,\label{lagp3}
\end{eqnarray}
where  $\  \mGamma := \Gamma - N$”.
What remains now is to eliminate the “illegal” variable $N$, which must be expressed in terms of the remaining (“legal”) variables. For this purpose we have to solve (with respect to $N$) equation~\eqref{N=lincalP}, with ${\cal P}$ given by  \eqref{Leg-m-a1}, i.e.~by the definition of non-metricity:
\begin{equation*}
    {\cal P}_{\ \ \lambda}^{\mu\nu} = \frac {\partial {\cal L}_{matt}}
    {\partial \mGamma^{\lambda}_{\ \mu\nu} }\, .
\end{equation*}
{\bf Example:}
As the “metric matter-Lagrangian-density” of the vector field $\phi^{\alpha}$, take the standard, quadratic form of its covariant derivatives:
\begin{eqnarray*}
 \Lag_{matt}(g,\, \mnabla \phi) =\frac{\sqrt{|\det g|} }{16 \pi} \,\left( \mnabla_{\alpha}\phi^{\beta}\right)\, \left( \mnabla^{\alpha}\phi_{\beta}\right)\, .
\end{eqnarray*}
We prove in Appendix \ref{example} that the corresponding Palatini Lagrangian \eqref{LP-1} is given by the following formula:
\begin{eqnarray}
 \Lag_P(g,\nabla \phi)= \frac{\sqrt{|\det g|} }{32\pi\, D_1\, D_2\,
(\phi^2-1)}\! && \bigg\{ D_1\,  D_2\, \left(\phi^2-2\right) \,
(\nabla_{\alpha}\phi^{\mu})\,(\nabla_{\beta}\phi^{\nu}) \,
g^{\alpha\beta}\, g_{\mu\nu}    + \nonumber\\
&&  + D_1\,  D_2\,   \phi^2  \, (\nabla_{\alpha}\phi^{\mu})\,
(\nabla_{\mu}\phi^{\alpha})  + \nonumber \\
&&+4\phi^2\, \left(\phi^2-1\right)\,
\left[\phi^{\alpha}\, \left(\nabla_{\alpha} \phi^{\mu}\right)\,
\phi_{\mu} \right]^2 + \nonumber\\
&&+ 4 D_2\,  \left(\phi^2-1\right)\, \left[\phi^{\alpha}\,
\left(\nabla_{\alpha} \phi^{\mu}\right)\, \phi_{\mu} \right]\,
\left(\nabla_{\beta} \phi^{\beta}\right) + \nonumber \\
&&-  \phi^2\,  D_2\,
\left(\phi^2-1\right)\, \left(\nabla_{\beta} \phi^{\beta}\right)^2 +\nonumber \\
  & & - D_1\,  \left(\phi^2-2\right)^2 \, \left[ (\nabla_{\mu}
\phi^{\alpha})\, \phi_{\alpha}\right]  \, \left[ (\nabla_{\nu}
\phi^{\beta})\, \phi_{\beta}\right]\, g^{\mu\nu}  + \nonumber\\
&& - 2 D_1\,      \left(\phi^2-2\right) \, \phi^2\,  \left[
(\nabla_{\mu} \phi^{\alpha})\, \phi_{\alpha}\right]  \, \left[
\phi^{\beta}\, (\nabla_{\beta} \phi^{\mu})\right] +\nonumber \\
  & &    - D_1\, \phi^4\,\left[  \phi^{\alpha}\, (\nabla_{\alpha}
\phi^{\mu})\right]\,  \left[  \phi^{\beta}\, (\nabla_{\beta}
\phi^{\nu})\right]\, g_{\mu\nu} \bigg\}
\, ,
\end{eqnarray}
where $D_1$, $D_2$ are polynomials of the variable
\begin{equation*}
\phi^2:=~\phi_{\alpha}\,\phi^{\alpha}\, ,
\end{equation*}
defined in Appendix \ref{example}.

\textbf{Example:} Transformation in the opposite direction. As the “Palatini-Lagrangian density” of the vector field $X^{\alpha}$, take the simplest, quadratic form of its covariant (with respect to the non-metric connection $\Gamma$) derivatives:

\begin{eqnarray*}
 \Lag_P(g,\, \nabla X)=\frac{\sqrt{|\det g|} }{16\pi} \,\left( \nabla_{\alpha} X^{\mu}\right)\, \left( \nabla_{\beta}X^{\nu}\right)\, g^{\alpha\beta}\, g_{\mu\nu} \, .
\end{eqnarray*}
We prove in Appendix \ref{example} that the corresponding metric matter Lagrangian density is equal to:
\begin{eqnarray}
 \Lag_{matt}(g,\, \mnabla X) = \frac{\sqrt{|\det g|} }{32\pi\, \widetilde{D}_1\, \widetilde{D}_2\, (X^2+1)} && \bigg\{  \widetilde{D}_1 \,  \widetilde{D}_2 \,\left(X^2+2\right) \, \left(\mnabla_{\alpha}X_{\beta}\right)\left(\mnabla^{\alpha}X^{\beta}\right) +   \nonumber \\
 &&+  \widetilde{D}_1 \,  \widetilde{D}_2 \,X^2\,  \left(\mnabla_{\alpha}X_{\beta}\right)\left(\mnabla^{\beta}X^{\alpha}\right)  +   \nonumber \\
 && + 4 \left(X^2+1\right)\, X^2\, \left[X^{\alpha}\, \left(\mnabla_{\alpha} X^{\mu}\right)\, X_{\mu} \right]^2 +   \nonumber \\
 && +  4 \widetilde{D}_2\,  \left(X^2+1\right)\,  \left[X^{\alpha}\, \left(\mnabla_{\alpha} X^{\mu}\right)\, X_{\mu} \right]\, \left(\mnabla_{\beta} X^{\beta}\right)+\nonumber \\
 &&-  X^2\,    \widetilde{D}_2 \,  \left(X^2+1\right)\,  \left(\mnabla_{\alpha} X^{\alpha}\right)^2  +   \nonumber \\
 && - \widetilde{D}_1\, \left(X^2-2\right)^2\,    \left[ (\mnabla_{\mu} X^{\alpha})\, X_{\alpha}\right]  \, \left[ (\mnabla^{\mu} X^{\beta})\, X_{\beta}\right] +\nonumber \\
 &&- 2  \widetilde{D}_1 \, \left(X^2+2\right)\,  X^2\, \left[
(\mnabla_{\mu} X^{\alpha})\, X_{\alpha}\right]  \, \left[  X^{\beta}\,
(\mnabla_{\beta} X^{\mu})\right]  +   \nonumber \\
 && - \widetilde{D}_1   \,X^4\, \left[  X^{\alpha}\, (\mnabla_{\alpha} X^{\mu})\right]\,  \left[  X^{\beta}\, (\mnabla_{\beta} X_{\mu})\right]\bigg\}
\, ,
\end{eqnarray}
where $\widetilde{D}_1$ i $\widetilde{D}_2$ are polynomials of the variable
\begin{equation*}
X^2:= X_{\alpha}\,X^{\alpha}\, ,
\end{equation*}
defined in Appendix \ref{example2}.

The two examples illustrate nicely our conclusion: “what is simple in the metric picture can be computationally difficult in the affine picture and {\em vice versa}”. Being conceptually much simpler (as far as the canonical structure of the theory is concerned, cf. \cite{pieszy}), the affine formulation (together with its consequence: the universal Palatini picture) are -- in our opinion -- better suited to become the starting point for any attempt to extrapolate validity of the present Gravity Theory by many orders of magnitude, namely from our Solar System scale to the cosmological scale, i.e.~to the dark matter scale (cf.~discussion in \cite{uni}).

\section{From affine to metric picture} \label{afin}

The equivalence proof of the three different formulations of the Gravity Theory (purely metric, metric-affine which we call “Palatini” and -- finally -- purely affine) would not be complete without a brief discussion of what happens if we begin with a purely affine Lagrangian density. Even if these results are contained {\em implicite} in our previous sections, we have decided, for the convenience of the reader, to give below the short equivalence proof starting from the affine picture. The main benefit of this discussion is that, from an affine perspective, it is obvious that the current Theory of Gravity is only a narrow sector of a much broader theory. This observation gives room for the description of other physical fields within a uniform geometric structure (cf. \cite{ein-autob} and \cite{uni}).

Consider, therefore, a generic affine variational principle, where the field configuration is described by the first jet of a symmetric connection $(\Gamma,\, \partial\Gamma)$ and the first jet of a matter field $(\phi, \partial \phi)$:
\begin{eqnarray}
 \delta\Lag_A (\Gamma,\, \partial\Gamma,\, \phi, \partial \phi)  &=&  \partial_{\nu}\left(\mathcal{P}_{\kappa}^{\ \lambda\mu\nu}\, \delta\Gamma^{\kappa}_{\ \lambda\mu} + p^{\nu}\, \delta\phi \right)=\nonumber   \\
&=& \left(\partial_\nu \mathcal{P}_{\kappa}^{\ \lambda\mu\nu}\right) \delta\Gamma^{\kappa}_{\ \lambda\mu}+
\mathcal{P}_{\kappa}^{\ \lambda\mu\nu}\, \delta\Gamma^{\kappa}_{\ \lambda\mu,\nu} + \nonumber \\
& &  +
\left( \partial_\nu p^\nu \right) \delta \phi + p^\nu \delta \phi_{,\nu} \, ,\label{varaffa}
\end{eqnarray}
or, equivalently,
\begin{eqnarray}
    \left\{
    \partial_\nu \mathcal{P}_{\kappa}^{\ \lambda\mu\nu} = \frac {\partial \Lag_A}{\partial \Gamma^{\kappa}_{\ \lambda\mu}}\, ,
    \ \mathcal{P}_{\kappa}^{\ \lambda\mu\nu} = \frac {\partial \Lag_A}{\partial \Gamma^{\kappa}_{\ \lambda\mu,\nu}} \right\}  \
   &\Longrightarrow &   \ \frac {\delta \Lag_A}{\delta \Gamma^{\kappa}_{\ \lambda\mu}} = 0\, ,  \label{DefP}
       \\
       \left\{
   \partial_\nu p^\nu = \frac {\partial \Lag_A}{\partial \phi}
     \, , \   p^\nu = \frac {\partial \Lag_A}{\partial \phi_{,\nu}} \right\} \   &\Longrightarrow &
     \   \frac {\delta \Lag_A}{\delta \phi} = 0
     \, .
\end{eqnarray}
As explained in Section \ref{sec var}, formula \eqref{varaffa} is equivalent to Euler-Lagrange equations, together with definition of momenta: $\mathcal{P}_{\kappa}^{\ \lambda\mu\nu}$ and $p^\nu$, canonically conjugate to $\Gamma^{\kappa}_{\ \lambda\mu}$ and $\phi$, respectively.

To manufacture an invariant scalar-density ${\cal L}_A$ from derivatives of $\Gamma$, the only method is to use the unique tensorial object which can be constructed from them, namely the Riemann tensor \eqref{def riemann}:
\begin{equation*}
    R^{\kappa}_{\ \lambda\mu\nu} = \Gamma^{\kappa}_{\ \lambda \nu, \mu} -\Gamma^{\kappa}_{\ \lambda \mu, \nu}+
    \Gamma^{\kappa}_{\ \sigma\mu } \, \Gamma^{\sigma}_{\ \lambda \nu}
    -\Gamma^{\kappa}_{\ \sigma\nu } \, \Gamma^{\sigma}_{\ \lambda \mu} \, .
\end{equation*}
Moreover, the connection $\Gamma$ can only enter into the game either {\em via} the curvature $R^{\kappa}_{\ \lambda\mu\nu}$ itself, or {\em via} some “covariant derivatives” of the matter field, which can be symbolically written as:
\begin{equation}\label{covphi}
    \nabla \phi := \partial \phi + ``\ \Gamma \cdot \phi \ " \, .
\end{equation}
This means that, in fact, we have $\Lag_A = \Lag_A (R^{\kappa}_{\ \lambda\mu\nu} , \phi , \nabla \phi)$. The Riemann tensor satisfies identities:
\begin{eqnarray*}
    R^{\kappa}_{\ \lambda\mu\nu} &=& - R^{\kappa}_{\ \lambda\nu\mu} \, \text{(skew-symmetry)}\, , \\
    R^{\kappa}_{\ [\lambda\mu\nu]} &=& 0\, \text{ (Bianchi I identity)}\, ,
\end{eqnarray*}
and, consequently, splits into three irreducible parts. First, its splits into its trace --  the Ricci tensor \eqref{def ricci} $R_{\lambda\nu}:=R^{\kappa}_{\ \lambda\kappa\nu}$ -- and the remaining traceless part $W^{\kappa}_{\ \lambda\mu\nu}$. The latter is an analogue of the Weyl tensor in the metric case. It is defined by the following identities:
\[
    W^{\kappa}_{\ \kappa\mu\nu} = W^{\kappa}_{\ \mu\kappa\nu} = W^{\kappa}_{\ \mu\nu\kappa} = 0 \, ,
\]
but other identities fulfilled by the Weyl tensor cannot even be formulated here, because there is \textbf{no} metric tensor to lower the upper index!  Furthermore, the Ricci tensor splits into its symmetric -- $K_{\mu\nu}$ -- and its skew-symmetric -- $F_{\mu\nu}$ -- parts, according to $R_{\mu\nu} = K_{\mu\nu} + F_{\mu\nu}$. Finally, decomposition of the Riemann tensor into three irreducible parts (i.e.:~$K$, $F$ and $W$) enables us to rewrite the gravitational component \eqref{varaffa} of the generating formula in a way similar to Lemma \ref{lemma5.1} in the purely metric theory (cf.~\cite{Univer} and see Appendix \ref{krzywizna}):
\begin{theorem}
\label{th 6.2}
The following identity holds:
\begin{eqnarray}
 \partial_{\nu}\left(\mathcal{P}_{\kappa}^{\ \lambda\mu\nu}\, \delta\Gamma^{\kappa}_{\ \lambda\mu}   \right) = \left(\nabla_{\nu}\mathcal{P}_{\kappa}^{\ \lambda\mu\nu} \right)\, \delta\Gamma^{\kappa}_{\ \lambda\mu} + \pi^{\mu\nu}\, \delta K_{\mu\nu}  +\chi^{\mu\nu}\, \delta F_{\mu\nu} -2\,  \Omega_{\kappa}^{\ \lambda[\mu\nu]}\, \delta W^{\kappa}_{\ \lambda\mu\nu}\, , \qquad
\end{eqnarray}
where
\begin{eqnarray*}
\pi^{\mu \nu} &=&-\frac 23 \mathcal{P}_{\kappa}^{\  \kappa (\mu \nu)} \, ,  \\
\chi^{\mu \nu} &=&- \frac 25  \mathcal{P}_{\kappa}^{\ \kappa [\mu \nu]} \, ,
\end{eqnarray*}
and $\Omega$ is the remaining, traceless part of $\mathcal{P}$.
\end{theorem}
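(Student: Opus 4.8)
The plan is to mirror the derivation of Lemma~\ref{lemma5.1}, but now keeping the two curvature pieces ($F$ and $W$) that dropped out in the metric case. First I would expand the left-hand side by the Leibniz rule,
\begin{equation*}
\partial_{\nu}\left(\mathcal{P}_{\kappa}^{\ \lambda\mu\nu}\, \delta\Gamma^{\kappa}_{\ \lambda\mu}\right) = \left(\partial_{\nu}\mathcal{P}_{\kappa}^{\ \lambda\mu\nu}\right)\delta\Gamma^{\kappa}_{\ \lambda\mu} + \mathcal{P}_{\kappa}^{\ \lambda\mu\nu}\, \delta\Gamma^{\kappa}_{\ \lambda\mu,\nu}\, ,
\end{equation*}
and convert the ordinary divergence into the covariant one. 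Since $\mathcal{P}$ is a tensor density of weight one, its covariant derivative carries a density term $-\Gamma^{\rho}_{\ \rho\nu}\mathcal{P}_{\kappa}^{\ \lambda\mu\nu}$ which, by the symmetry of $\Gamma$ in its lower indices, cancels exactly against the term produced by the contracted index $\nu$; this leaves
\begin{equation*}
\partial_{\nu}\mathcal{P}_{\kappa}^{\ \lambda\mu\nu} = \nabla_{\nu}\mathcal{P}_{\kappa}^{\ \lambda\mu\nu} + \Gamma^{\sigma}_{\ \kappa\nu}\mathcal{P}_{\sigma}^{\ \lambda\mu\nu} - \Gamma^{\lambda}_{\ \sigma\nu}\mathcal{P}_{\kappa}^{\ \sigma\mu\nu} - \Gamma^{\mu}_{\ \sigma\nu}\mathcal{P}_{\kappa}^{\ \lambda\sigma\nu}\, .
\end{equation*}
This reproduces the term $\left(\nabla_{\nu}\mathcal{P}_{\kappa}^{\ \lambda\mu\nu}\right)\delta\Gamma^{\kappa}_{\ \lambda\mu}$ on the right-hand side and reduces the theorem to an algebraic identity for the remainder $S$, consisting of $\mathcal{P}_{\kappa}^{\ \lambda\mu\nu}\,\delta\Gamma^{\kappa}_{\ \lambda\mu,\nu}$ together with the three $\Gamma\,\mathcal{P}\,\delta\Gamma$ terms just produced.

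Next I would recognise $S$ as a contraction of $\mathcal{P}$ against the variation of the full Riemann tensor. Because $\Lag_A$ depends on $\partial\Gamma$ only through $R^{\kappa}_{\ \lambda\mu\nu}$, the chain rule gives $\mathcal{P}_{\kappa}^{\ \lambda\mu\nu} = -2\,\Pi_{\kappa}^{\ (\lambda\mu)\nu}$, where $\Pi_{\kappa}^{\ \lambda\mu\nu} := \partial\Lag_A/\partial R^{\kappa}_{\ \lambda\mu\nu}$ is antisymmetric in $\mu\nu$. Substituting
\begin{equation*}
\delta R^{\kappa}_{\ \lambda\mu\nu} = \delta\Gamma^{\kappa}_{\ \lambda\nu,\mu} + \Gamma^{\kappa}_{\ \sigma\mu}\,\delta\Gamma^{\sigma}_{\ \lambda\nu} + \delta\Gamma^{\kappa}_{\ \sigma\mu}\,\Gamma^{\sigma}_{\ \lambda\nu} - (\mu\leftrightarrow\nu)
\end{equation*}
and using that $\delta\Gamma$ is symmetric in its lower pair, I would verify $S = \Pi_{\kappa}^{\ \lambda\mu\nu}\,\delta R^{\kappa}_{\ \lambda\mu\nu}$: the derivative terms match at once, since contracting $\Pi_{\kappa}^{\ \lambda\mu\nu}$ with $\delta\Gamma^{\kappa}_{\ \lambda\mu,\nu}$ retains only the $\lambda\mu$-symmetric part $-\tfrac12\mathcal{P}$, while the algebraic terms coincide after a straightforward relabelling (this bookkeeping is relegated to Appendix~\ref{krzywizna}).

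The substantive step is to insert into $\Pi_{\kappa}^{\ \lambda\mu\nu}\,\delta R^{\kappa}_{\ \lambda\mu\nu}$ the irreducible decomposition of the curvature, $R^{\kappa}_{\ \lambda\mu\nu} = W^{\kappa}_{\ \lambda\mu\nu} + (\text{trace parts built from } K_{\mu\nu}\text{ and } F_{\mu\nu})$. As $W$ is totally traceless, its variation pairs only with the traceless part $\Omega$ of $\mathcal{P}$, yielding $-2\,\Omega_{\kappa}^{\ \lambda[\mu\nu]}\,\delta W^{\kappa}_{\ \lambda\mu\nu}$, whereas $\delta K_{\mu\nu}$ and $\delta F_{\mu\nu}$ couple to the two independent traces $\mathcal{P}_{\kappa}^{\ \kappa(\mu\nu)}$ and $\mathcal{P}_{\kappa}^{\ \kappa[\mu\nu]}$. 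I expect the main obstacle to be fixing the rational coefficients $\pi^{\mu\nu}=-\tfrac23\mathcal{P}_{\kappa}^{\ \kappa(\mu\nu)}$ and $\chi^{\mu\nu}=-\tfrac25\mathcal{P}_{\kappa}^{\ \kappa[\mu\nu]}$: the symmetric and antisymmetric Ricci sectors carry different weights because the first Bianchi identity $R^{\kappa}_{\ [\lambda\mu\nu]}=0$, contracted on $\kappa$ and $\lambda$, ties the segmental trace $R^{\kappa}_{\ \kappa\mu\nu}$ to $F_{\mu\nu}$ while leaving $K_{\mu\nu}$ free. Writing out this constraint and inverting the resulting linear relation between the two traces of $\mathcal{P}$ and the pair $(\pi,\chi)$ is precisely what produces the distinct fractions $\tfrac23$ and $\tfrac25$, and is where the careful computation lies.
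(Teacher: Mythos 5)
Your argument is correct, and it shares the same skeleton as the paper's proof in Appendix~\ref{proof th 6.2}: expand by the Leibniz rule, convert the ordinary divergence of the density $\mathcal{P}$ into a covariant one, recognise the remainder as the momentum contracted with the variation of the curvature, and decompose the curvature into its irreducible pieces. The genuine difference is the choice of curvature representative. You work with the Riemann tensor and the auxiliary momentum $\Pi_{\kappa}^{\ \lambda\mu\nu}=\partial\Lag_A/\partial R^{\kappa}_{\ \lambda\mu\nu}$, related to $\mathcal{P}$ by $\mathcal{P}_{\kappa}^{\ \lambda\mu\nu}=-2\,\Pi_{\kappa}^{\ (\lambda\mu)\nu}$, and you locate the origin of the fractions $\tfrac23$ and $\tfrac25$ in the contracted Bianchi identity $R^{\kappa}_{\ \kappa\mu\nu}=2F_{\mu\nu}$ --- which is indeed where they come from. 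The paper instead routes the computation through the equivalent curvature tensor $K^{\kappa}_{\ \lambda\mu\nu}=-\tfrac23\,R^{\kappa}_{\ (\lambda\mu)\nu}$ of Appendix~\ref{krzywizna}, whose algebraic symmetries ($K^{\kappa}_{\ \lambda\mu\nu}=K^{\kappa}_{\ \mu\lambda\nu}$, $K^{\kappa}_{\ (\lambda\mu\nu)}=0$) coincide with those of $\Gamma^{\kappa}_{\ \lambda\mu,\nu}$ modulo its totally symmetric part, so that $\mathcal{P}_{\kappa}^{\ \lambda\mu\nu}=\partial\Lag_A/\partial K^{\kappa}_{\ \lambda\mu\nu}$ holds with no extra symmetrisation, and the coefficients are then read off directly from the decomposition \eqref{krzywizna kijowskiego 2}. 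What your route costs is carrying the symmetrisation $-2\,\Pi_{\kappa}^{\ (\lambda\mu)\nu}$ through the matching of the algebraic $\Gamma\Gamma$ terms (precisely the bookkeeping the paper says it introduces $K$ to avoid); what it buys is that you never leave the familiar Riemann tensor. Both verifications you defer --- the covariantisation producing the three $\Gamma\,\mathcal{P}\,\delta\Gamma$ terms, and the matching against $\delta R$ --- do go through, because the needed a~priori symmetries of $\mathcal{P}$ follow from the antisymmetry of $\Pi$ in its last index pair; so there is no gap, only the computation you already flagged.
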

Applying above decomposition to the variational formula \eqref{varaffa}, we obtain:
\begin{eqnarray}
 \delta\Lag_A(K_{\mu\nu}, F_{\mu\nu}, W^{\kappa}_{\ \lambda\mu\nu} , \phi , \nabla \phi)  &=& \partial_{\nu}\left(p^{\nu}\, \delta\phi \right)  +  \left(\nabla_{\nu}\mathcal{P}_{\kappa}^{\ \lambda\mu\nu} \right)\, \delta\Gamma^{\kappa}_{\ \lambda\mu} + \pi^{\mu\nu}\, \delta K_{\mu\nu}+ \nonumber \\
  &&  +\chi^{\mu\nu}\, \delta F_{\mu\nu} -2\,  \Omega_{\kappa}^{\ \lambda[\mu\nu]}\, \delta W^{\kappa}_{\ \lambda\mu\nu}  \, . \label{varaff2}
\end{eqnarray}
Because $\Gamma$ enters here through “covariant derivatives” of matter field, the gravitational part of the Euler-Lagrange  equation reads:
\begin{equation}\label{ELaff}
    \nabla_{\nu}\mathcal{P}_{\kappa}^{\ \lambda\mu\nu} = \frac{\partial\Lag_A}{\partial \Gamma^{\kappa}_{\ \lambda\mu}} = \frac{\partial\Lag_A}{\partial(\nabla_{\alpha} \phi)}
    \frac{\partial(\nabla_{\alpha} \phi)}{\partial \Gamma^{\kappa}_{\ \lambda\mu}}=`` p\cdot \phi " \, .
\end{equation}

Mathematical structure and the physical interpretation of a general theory \eqref{varaff2} will be discussed in another paper. Here, we limit ourselves to the conventional General Relativity Theory. This means, as proved in Section \ref{univ-aff}, that we consider Lagrangians which depend upon curvature {\em via} the symmetric part of the Ricci tensor, exclusively: $\Lag_A=\Lag_A(K_{\mu\nu} , \phi , \nabla \phi)$. Consequently, we have:
\begin{eqnarray}
\chi^{\mu\nu}&=&\frac{\partial\Lag_A}{\partial F_{\mu\nu}} =0\, , \\
-2\, \Omega_{\kappa}^{\ \lambda[\mu\nu]}&=&\frac{\partial\Lag_A}{\partial W^{\kappa}_{\ \lambda\mu\nu}} =0 \, , \\
 \pi^{\mu\nu} &=& \frac{\partial\Lag_A}{\partial K_{\mu\nu}} \, , \\
{\cal P}_\kappa^{\ \lambda\mu\nu} = \pi_{\kappa}^{\ \lambda\mu\nu} &=&
{\pi}^{\lambda\mu} \, \delta_{\kappa}^{\nu} -\delta_{\kappa}^{(\lambda}\, \pi^{\mu)\nu} \, .
\end{eqnarray}
This means, that we obtain formula \eqref{var Lag P 2-}, used already in Section \ref{univ-P}:
\begin{eqnarray}
\label{varaff3}
\delta\Lag_A  =\partial_{\nu}\left(p^{\nu}\, \delta\phi \right) +  \left(\nabla_{\nu}\pi_{\kappa}^{\ \lambda\mu\nu} \right)\, \delta\Gamma^{\kappa}_{\ \lambda\mu} + \pi^{\mu\nu}\, \delta K_{\mu\nu}\, , \qquad
\end{eqnarray}
or equivalently:
\begin{eqnarray}
\label{varaff4}
\delta\Lag_A  =\partial_{\nu}\left(\pi_{\kappa}^{\ \lambda\mu\nu}\, \delta \Gamma^{\kappa}_{\ \lambda\mu}+ p^{\nu}\, \delta\phi \right)\, ,
\end{eqnarray}
cf. formula \eqref{var Lag P 1}. The gravitational field equations are:
\begin{eqnarray}
 \nabla_{\nu}\pi_{\kappa}^{\ \lambda\mu\nu}  &=&\frac{\partial \Lag_A}{\partial\Gamma^{\kappa}_{\ \lambda\mu}}=
 \frac{\partial\Lag_A}{\partial \nabla_{\alpha}\phi}\, \frac{\partial \nabla_{\alpha}\phi}{\partial \Gamma^{\kappa}_{\ \lambda\mu}}= p^{\alpha}\, \frac{\partial \nabla_{\alpha}\phi}{\partial \Gamma^{\kappa}_{\ \lambda\mu}} = `` p\cdot \phi "\, ,  \label{eqa1} \\
 \pi^{\mu\nu} &=& \frac{\partial \Lag_A}{\partial K_{\mu\nu}} \, . \label{eqa2}
\end{eqnarray}
The first equation enables us to find the non-metricity tensor $N$ as a difference between our affine connection $\Gamma$ and the Levi-Civita connection $\mGamma$ -- see formula~\eqref{Gamma-gen}. Then, equation \eqref{varaff4} reads:
\begin{eqnarray}
\label{varaff5}
\delta\Lag_A  =\partial_{\nu}\left(\pi_{\kappa}^{\ \lambda\mu\nu}\, \delta \mGamma^{\kappa}_{\ \lambda\mu}+ \pi_{\kappa}^{\ \lambda\mu\nu}\, \delta N^{\kappa}_{\ \lambda\mu} + p^{\nu}\, \delta\phi \right)\, . \qquad
\end{eqnarray}
The first term can be calculated from identity \eqref{varG} which was already presented in Section \ref{sec var}:
\begin{eqnarray}
 \partial_{\nu}\left(\pi_{\kappa}^{\ \lambda\mu\nu}\, \delta \mGamma^{\kappa}_{\ \lambda\mu}  \right) &=&  \delta {\cal L}_H + \frac 1{16 \pi}\, \kolo{\cal G}^{\mu\nu}\,  \delta g_{\mu\nu} \, .
\end{eqnarray}
For the second term we use the “inverted” Lemma \ref{lemma non-metricity} and obtain:
\begin{eqnarray}
 \partial_{\nu}\left( \pi_{\kappa}^{\ \lambda\mu\nu}\, \delta N^{\kappa}_{\ \lambda\mu}  \right) =  \partial_{\kappa} \left( {\cal R}^{\mu\nu\kappa}\, \delta g_{\mu\nu} \right) - \delta \left[ \mnabla_\kappa  \mathcal{R}_{\sigma}^{\ \sigma\kappa} \right]\, , \qquad
\end{eqnarray}
where \begin{eqnarray}
\mathcal{R}^{\mu\nu\kappa} &=& \frac{\sqrt{|\det g|}}{16\pi}\, \left[N^{\kappa\mu\nu} - g^{\kappa(\mu}\, N^{\ \nu)\sigma}_{ \sigma}  +\frac 12 \left( N_{\sigma}^{\ \sigma\kappa}  - N^{\kappa\sigma}_{\ \ \sigma}\right)\, g^{\mu\nu} \right] \, .
\end{eqnarray}
Analogously, the “inverted”  Lemma \ref{lemma varGamma} gives us:
\begin{eqnarray*}
     \partial_{\kappa} \left( {\cal R}^{\mu\nu\kappa}\, \delta g_{\mu\nu} \right) &=&\mathcal{P}_{\ \ \lambda}^{ \mu\nu}\,  \delta \mGamma^{\lambda}_{\  \mu\nu} + \left(\mnabla_{\kappa} \mathcal{R}^{\mu\nu\kappa}  \right)\,  \delta g_{\mu\nu}      \, ,
\end{eqnarray*}
where:
\begin{eqnarray}
 \mathcal{P}^{\mu\nu}_{\ \ \lambda} := \mathcal{R}^{\  \mu\nu}_{ \lambda} +  \mathcal{R}^{\  \nu\mu}_{ \lambda}\, .
\end{eqnarray}
This way, variational formula \eqref{varaff5} can be rewritten as follows:
\begin{eqnarray}
\delta\Lag_A  &=&\partial_{\nu}\left( p^{\nu}\, \delta\phi \right) +  \delta {\cal L}_H  +  \mathcal{P}_{\ \ \lambda}^{ \mu\nu}\,  \delta \mGamma^{\lambda}_{\  \mu\nu} +
\nonumber \\
&&+\left(\frac 1{16 \pi}\, \kolo{\cal G}^{\mu\nu} + \mnabla_{\kappa} \mathcal{R}^{\mu\nu\kappa}  \right)\,  \delta g_{\mu\nu} - \delta \left[ \mnabla_\kappa  \mathcal{R}_{\sigma}^{\ \sigma\kappa} \right] \, . \label{varaff6}
\end{eqnarray}
Putting the complete variation $\delta \left[\mnabla R \right]$ on the left-hand side, we finally obtain the metric Lagrangian (cf.~\eqref{Pala-def}):
\[
\Lag_g:= \Lag_A + \mnabla_\kappa  \mathcal{R}_{\sigma}^{\ \sigma\kappa}
\]
together with its variation:
\begin{eqnarray*}
\delta \Lag_g &=& \partial_{\nu}\left( p^{\nu}\, \delta\phi \right) +  \delta {\cal L}_H  +  \mathcal{P}_{\ \ \lambda}^{ \mu\nu}\,  \delta \mGamma^{\lambda}_{\  \mu\nu} +\left(\frac 1{16 \pi}\, \kolo{\cal G}^{\mu\nu} + \mnabla_{\kappa} \mathcal{R}^{\mu\nu\kappa}  \right)\,  \delta g_{\mu\nu} \, .
\end{eqnarray*}
Let us define the matter Lagrangian density $\Lag_{matt} := \Lag_g - \Lag_H$. It satisfies equation:
\begin{eqnarray}
 \delta\left(\Lag_g - \Lag_H \right) &=& \delta \Lag_{matt} =  \partial_{\nu}\left( p^{\nu}\, \delta\phi \right)  +  \mathcal{P}_{\ \ \lambda}^{ \mu\nu}\,  \delta \mGamma^{\lambda}_{\  \mu\nu} +  \nonumber \\
 &&+\left(\frac 1{16 \pi}\, \kolo{\cal G}^{\mu\nu} + \mnabla_{\kappa} \mathcal{R}^{\mu\nu\kappa}  \right)\,  \delta g_{\mu\nu}\, .
\end{eqnarray}
This equation implies the gravitational field equations \eqref{aaa} -- \eqref{Leg-m-a2} in the purely metric formulation.

We see that the current standard version of general relativity uses only one of the three sectors of complete affine theory: the one corresponding to the symmetric part $K_{\mu\nu}$ of the Ricci. The remaining sectors of the curvature, namely $F$ and $W$, are “fallow” and not used. It is shown in \cite{uni} that they can describe properly the electromagnetic field (cf.~also the paper by H.~Weyl \cite{Weyl}) and the dark matter, respectively.

\section{Conclusions}\label{concl}

The “Palatini method of variation” leads to the standard, metric formulation of General Relativity Theory only for a special class of matter fields: where the matter Lagrangian density $\Lag_{matt} = \Lag_{matt}(\phi , \partial \phi , g)$ does not depend upon connection coefficients (i.e. does not depend upon derivatives of the metric tensor). But, for a generic matter field we have ${\cal L}_{matt} = {\cal L}_{matt} (\phi , \mnabla \phi , g) = {\cal L}_{matt} (\phi , \partial \phi , g, \partial g)$ and, whence, the “Palatini method” leads {\em a priori} to the non-metric connection $\Gamma \ne \ \ \mGamma$. In this paper we propose to accept this non-metricity of the connection. We show that the theory, rewritten in terms of this connection -- although entirely equivalent to the conventional GR theory -- acquires a much simpler mathematical structure. Physically, this means that similarly as “mass generates curvature” (i.e. departure from flatness), the matter sensitivity towards the connection generates the departure of the connection from metricity. The two spacetime geometric structures, namely: metric and connection, acquire an independent status. The relation between them is not assumed {\em a priori} but is governed by field equations. The current theory of gravitational interactions uses only one sector of a generic mathematical affine theory discussed here. It is likely that the remaining sectors can be used to describe other physical (possibly non-gravitational) interactions.

\appendix

\section{Proof of Lemma \ref{lemma varGamma}}
\label{proof lemma varGamma}

Using explicit formula for the metric connection coefficients, we obtain:
\begin{eqnarray*}
    \mathcal{P}_{\ \ \lambda}^{ \mu\nu}\,  \delta \stackrel{\circ \ }{\Gamma^{\lambda}}_{ \mu\nu} &=& \mathcal{P}_{\ \ \lambda}^{ \mu\nu}\, \delta \left[\frac 12\, g^{\lambda\kappa}\, \left(g_{\kappa\mu , \nu} +  g_{\kappa\nu , \mu} - g_{\mu\nu , \kappa}   \right)   \right]  = \mathcal{P}_{\ \ \lambda}^{ \mu\nu}\, \delta \left[ g^{\lambda\kappa}\, \left(g_{\kappa\mu , \nu}  -  \frac 12\, g_{\mu\nu , \kappa}   \right)   \right]= \\
    &=&  \mathcal{P}_{\ \ \lambda}^{ \mu\nu}\,  \left(g_{\kappa\mu , \nu}  -   \frac 12\, g_{\mu\nu , \kappa}   \right)\,  \delta g^{\lambda\kappa}  +\mathcal{P}^{\mu\nu\kappa } \, \delta \left(g_{\kappa\mu , \nu}  -  \frac 12\, g_{\mu\nu , \kappa}   \right)= \\
    &=& - \mathcal{P}_{\ \ \lambda}^{ \mu\nu}\, \left(g_{\kappa\mu , \nu}  -  \frac 12\, g_{\mu\nu , \kappa}   \right)\,  g^{\lambda\alpha}\,  g^{\kappa\beta}\,  \delta g_{\alpha\beta}  +\mathcal{P}^{\mu\nu\kappa} \, \delta \left(g_{\kappa\mu , \nu}  -  \frac 12\, g_{\mu\nu , \kappa}   \right)= \\
    &=& -  \mathcal{P}^{\mu\nu\alpha}\, \stackrel{\circ\ }{\Gamma^{\beta}}_{ \mu\nu}\,  \delta g_{\alpha\beta}    + \partial_{\kappa} \left[ \left(  \mathcal{P}^{\nu \kappa \mu} - \frac 12\,  \mathcal{P}^{\mu\nu\kappa}\right)\, \delta g_{\mu\nu} \right]  + \\
    && -   \left[\partial_{\kappa}\left(  \mathcal{P}^{\nu \kappa \mu} - \frac 12\,  \mathcal{P}^{\mu\nu\kappa}\right)\right] \, \delta g_{\mu\nu}= \\
    &=& - \left({\cal R}^{\mu\nu\kappa}_{\ \ \  ,\kappa} + \mathcal{P}^{\alpha \beta\mu}\, \stackrel{\circ\ }{\Gamma^{\nu}}_{ \alpha\beta} \right)\,  \delta g_{\mu\nu}   +\partial_{\kappa}\left( {\cal R}^{\mu\nu\kappa}\, \delta g_{\mu\nu} \right)  \, .
\end{eqnarray*}
Taking into account that both $\mathcal{R}$ and $\cal P$ are tensor densities, we have:
\begin{eqnarray*}
    \mnabla_{\kappa} \mathcal{R}^{\mu \nu \kappa} &=&\mathcal{R}^{\mu \nu \kappa}_{\ \ \ ,\kappa} - \stackrel{\circ \ }{\Gamma^{\sigma}}_{ \sigma \kappa}\, \mathcal{R}^{\mu \nu \kappa} +  \stackrel{\circ \ }{\Gamma^{\mu}}_{ \sigma \kappa}\, \mathcal{R}^{\sigma \nu \kappa} +\stackrel{\circ \ }{\Gamma^{\nu}}_{ \sigma \kappa}\, \mathcal{R}^{\mu \sigma \kappa} + \stackrel{\circ \ }{\Gamma^{\kappa}}_{ \sigma \kappa}\, \mathcal{R}^{\mu \nu \sigma} = \\
    &=&\mathcal{R}^{\mu \nu \kappa}_{\ \ \ ,\kappa} + \frac 12\, \stackrel{\circ \ }{\Gamma^{\mu}}_{ \sigma \kappa}\, \mathcal{P}^{\sigma\kappa\nu } + \frac 12\, \stackrel{\circ \ }{\Gamma^{\nu}}_{ \sigma \kappa}\, \mathcal{P}^{\sigma\kappa\mu} = \mathcal{R}^{\mu \nu \kappa}_{\ \ \ ,\kappa} +  \mathcal{P}^{ \sigma\kappa(\mu}\stackrel{\circ \ }{\Gamma^{\nu)}}_{ \sigma \kappa}    \, ,
\end{eqnarray*}
which implies the thesis of the Lemma.

\section{Proof of Lemma \ref{lemma non-metricity} }
\label{proof lemma non-metricity}

Using formula \eqref{R delta pi} we obtain:
\begin{eqnarray*}
    {\cal R}^{\mu\nu\kappa}\, \delta g_{\mu\nu}&=&\frac{16 \pi}{\sqrt{|\det g|}}\,  \left(\frac 12 \, \mathcal{R}_{\lambda}^{\  \lambda\kappa}\,  g_{\alpha\beta} - \mathcal{R}_{\alpha\beta}^{\ \ \  \kappa} \right)\, \delta \pi^{\alpha\beta} =\\
    &=& {16 \pi}\,  \left(\frac 12 \, R_{\lambda}^{\  \lambda\kappa}\,  g_{\alpha\beta} - R_{\alpha\beta}^{\ \ \  \kappa} \right)\, \delta \pi^{\alpha\beta} =\\
    &=& \delta \left\{
    {16 \pi}\,  \left(\frac 12 \, R_{\lambda}^{\  \lambda\kappa}\,  g_{\alpha\beta} - R_{\alpha\beta}^{\ \ \  \kappa} \right)\,  \pi^{\alpha\beta}\right\}+\\
    &&  -     \pi^{\alpha\beta} \delta \left[
    {16 \pi}\,  \left(\frac 12 \, R_{\lambda}^{\  \lambda\kappa}\,  g_{\alpha\beta} - R_{\alpha\beta}^{\ \ \  \kappa} \right)
    \right]= \\
    &=&      \delta  {\cal R}_{\sigma}^{\ \sigma \kappa} - \pi^{\alpha\beta}\, \delta \left[ 16 \pi\, \left( \frac 12\, R_{\sigma}^{\ \sigma \kappa}\,  g_{\alpha\beta} - R_{\alpha\beta}^{\ \ \ \kappa} \right) \right] \, ,
\end{eqnarray*}
and, consequently:
\begin{eqnarray*}
     \partial_{\kappa}\left( {\cal R}^{\mu\nu\kappa}\delta g_{\mu\nu} \right)   =  \delta \left[ {\cal R}_{\sigma \ \ ,\kappa}^{\ \sigma \kappa}\right] +\partial_{\kappa}\left\{\pi^{\alpha\beta}\, \delta \left[16 \pi\, \left( R_{\alpha\beta}^{\ \ \ \kappa} - \frac 12\, R_{\sigma}^{\ \sigma \kappa}\,  g_{\alpha\beta}  \right) \right]\right\} \, .
\end{eqnarray*}
Since $\mathcal{R}^{\mu\nu\kappa}\,  g_{\mu\nu}$ is a vector density, we have:
\[
    {\cal R}_{\sigma \ \ ,\kappa}^{\ \sigma \kappa}= \
    \mnabla_\kappa  \mathcal{R}_{\sigma}^{\ \sigma\kappa} \, ,
\]
and, whence:
\begin{eqnarray*}
   \partial_{\kappa}\left( {\cal R}^{\mu\nu\kappa}\delta g_{\mu\nu} \right)  =  \delta \left[ \mnabla_\kappa  \mathcal{R}_{\sigma}^{\ \sigma\kappa} \right] +  \partial_{\kappa} \left\{\pi^{\alpha\beta}\, \delta \left[16 \pi\, \left( R_{\alpha\beta}^{\ \ \ \kappa} - \frac 12\, R_{\sigma}^{\ \sigma \kappa}\,  g_{\alpha\beta}  \right) \right]\right\} \, .
\end{eqnarray*}
Now, it remains to check that the quantity $N$ defined by \eqref{def: tensor N obraz metryczny} satisfies identity
\begin{eqnarray*}
    \pi_{\kappa}^{\ \lambda \mu \nu}\, \delta N^{\kappa}_{\ \lambda \mu} :=   \pi^{\alpha\beta}\, \delta \left[ 16 \pi \left(R_{\alpha\beta}^{\ \ \ \nu} - \frac 12\, R_{\sigma}^{\ \sigma \nu}\,  g_{\alpha\beta}  \right) \right]\, ,
\end{eqnarray*}
which ends the proof.

\section{Different representations of the curvature and its decomposition into irreducible components}
\label{krzywizna}

The Riemann tensor of a symmetric connection, defined by \eqref{def riemann}, splits into the three irreducible components: the symmetric ($K_{\mu\nu}$) and the skew-symmetric ($F_{\mu\nu}$) parts of the Ricci tensor $R_{\mu\nu}:=R^{\kappa}_{\ \mu \kappa \nu}= K_{\mu\nu} + F_{\mu\nu}$, and the remaining traceless part $W^{\kappa}_{\ \lambda\mu\nu}$, according to the following decomposition formula:
\begin{eqnarray}
R^{\kappa}_{\ \lambda \mu \nu} &=& \frac 13 \left(\delta^{\kappa}_{\mu}\, K_{\lambda \nu} - \delta^{\kappa}_{\nu}\, K_{\lambda \mu} \right) +\frac 15 \left(2\,\delta^{\kappa}_{\lambda}\, F_{\mu \nu} + \delta^{\kappa}_{\mu}\, F_{\lambda \nu} - \delta^{\kappa}_{\nu}\, F_{\lambda \mu} \right) + W^{\kappa}_{\ \lambda \mu \nu} \, .\qquad
\label{rozklad: tensor riemanna}
\end{eqnarray}
The Riemann tensor satisfies the following two algebraic identities:
\begin{equation}\label{idenR}
    R^{\kappa}_{\ \lambda \mu \nu} = - R^{\kappa}_{\ \lambda \nu \mu} \, ,
    \quad   R^{\kappa}_{\ [\lambda \mu \nu]}=0 \, .
\end{equation}
(The square bracket denotes skew-symmetrization.)

As will be seen in the next Appendix  (see also \cite{Senger1}, \cite{Univer}), for many calculable purposes it is useful to represent the same geometric object (the curvature) by a different, but totally equivalent, {\em curvature tensor},
defined in terms of the first jet of the connection by the following formula:
\begin{eqnarray}
K^{\kappa}_{\ \lambda \mu \nu} :=  \Gamma^{\kappa}_{\ \lambda \mu \nu} - \Gamma^{\kappa}_{\ (\lambda \mu \nu)} + \Gamma^{\sigma}_{\ \lambda \mu}\, \Gamma^{\kappa}_{\ \nu \sigma} -  \Gamma^{\sigma}_{\ (\lambda \mu}\, \Gamma^{\kappa}_{\ \nu) \sigma} \, .  \qquad
\label{krzywizna kijowskiego 1}
\end{eqnarray}
The equivalence between the two objects is assured by the following identities:
\begin{eqnarray}
    K^{\kappa}_{\ \lambda \mu \nu} &=& -\frac 23 \, R^{\kappa}_{\ (\lambda \mu) \nu}\, , \quad  R^{\kappa}_{\ \lambda \mu \nu} = -2 \, K^{\kappa}_{\ \lambda [\mu \nu]}\, .
\end{eqnarray}
(The round bracket denotes symmetrization.)

The curvature tensor $K$ satisfies the following two algebraic identities:
\begin{equation}\label{idenK}
        K^{\kappa}_{\ \lambda \mu \nu} = K^{\kappa}_{\  \mu \lambda \nu} \, ,  \qquad   K^{\kappa}_{\ (\lambda \mu \nu)}=0 \, ,
\end{equation}
analogous to \eqref{idenR}, and can be decomposed into its irreducible components in a way analogous  to \eqref{rozklad: tensor riemanna}, namely:
\begin{eqnarray}
    K^{\kappa}_{\ \lambda \mu \nu} = -\frac 19\left(\delta^{\kappa}_{\lambda}\, K_{\mu \nu} + \delta^{\kappa}_{\mu}\, K_{\lambda \nu} - 2 \delta^{\kappa}_{\nu}\, K_{\lambda \mu} \right) - \frac 15\left(\delta^{\kappa}_{\lambda}\, F_{\mu \nu} + \delta^{\kappa}_{\mu}\, F_{\lambda \nu} \right) + U^{\kappa}_{\  \lambda \mu  \nu} \, ,
    \label{krzywizna kijowskiego 2}
\end{eqnarray}
where
\begin{eqnarray*}
    U^{\kappa}_{\  \lambda \mu  \nu} := -\frac 23\, W^{\kappa}_{\  (\lambda \mu)  \nu} \, .
\end{eqnarray*}

\section{Proof of Theorem \ref{th 6.2} }
\label{proof th 6.2}

Because curvature tensor is equivalent with Riemann, we can assume that the invariant Lagrangian density $\Lag_A (\Gamma,\, \partial\Gamma,\, \phi, \partial \phi)$ depends upon derivatives of the connection coefficients {\em via} the tensor $K^{\kappa}_{\ \lambda \mu \nu}$, i.e. {\em via} the expression “$\Gamma^{\kappa}_{\ \lambda \mu \nu} - \Gamma^{\kappa}_{\ (\lambda \mu \nu)}$” which is: 1) symmetric in indices $(\lambda, \mu)$ and 2) its totally symmetric part $\Gamma^{\kappa}_{\ (\lambda \mu \nu)}~-~\Gamma^{\kappa}_{\ (\lambda \mu \nu)}=0$ vanishes identically. Being equal to the derivative of $\Lag_A$ with respect to this quantity, the momentum $\mathcal{P}_{\kappa}^{\ \lambda \mu \nu}$ must fulfil {\em a priori} identities similar to \eqref{idenK}, namely:
\begin{equation}\label{idenP}
    \mathcal{P}_{\kappa}^{\ \lambda \mu \nu} = \mathcal{P}_{\kappa}^{\  \mu \lambda \nu} \, , \quad  \mathcal{P}_{\kappa}^{\ (\lambda \mu \nu)}=0 \, .
\end{equation}
We have, therefore:
\begin{eqnarray*}
    \partial_{\nu}\left(\mathcal{P}_{\kappa}^{\ \lambda \mu \nu}\, \delta \Gamma^{\kappa}_{\ \lambda \mu}\right)  &=& \mathcal{P}_{\kappa \ \ \ ,\nu}^{\ \lambda \mu \nu}\, \delta \Gamma^{\kappa}_{\ \lambda \mu} + \mathcal{P}_{\kappa}^{\ \lambda \mu \nu}\, \delta \Gamma^{\kappa}_{\ \lambda \mu,\nu}= \\
    &=& \mathcal{P}_{\kappa \ \ \ ,\nu}^{\ \lambda \mu \nu}\, \delta \Gamma^{\kappa}_{\ \lambda \mu} + \mathcal{P}_{\kappa}^{\ \lambda \mu \nu}\, \delta \left(\Gamma^{\kappa}_{\ \lambda \mu,\nu} - \Gamma^{\kappa}_{\ (\lambda \mu,\nu)} \right)     \, ,
\end{eqnarray*}
because the last term vanishes. Using definition~(\ref{krzywizna kijowskiego 1}) of the curvature tensor, we get:
\begin{eqnarray*}
    \partial_{\nu}\left(\mathcal{P}_{\kappa}^{\ \lambda \mu \nu}\, \delta \Gamma^{\kappa}_{\ \lambda \mu}\right)&=& \mathcal{P}_{\kappa}^{\ \lambda \mu \nu}\, \delta K^{\kappa}_{\ \lambda \mu,\nu} + \mathcal{P}_{\kappa \ \ \ ,\nu}^{\ \lambda \mu \nu}\, \delta \Gamma^{\kappa}_{\ \lambda \mu} - \mathcal{P}_{\kappa}^{\ \lambda \mu \nu}\, \delta \left(  \Gamma^{\sigma}_{\ \lambda \mu}\, \Gamma^{\kappa}_{\ \nu \sigma}  -  \Gamma^{\sigma}_{\ (\lambda \mu}\, \Gamma^{\kappa}_{\ \nu) \sigma} \right) =\\
    &=& \mathcal{P}_{\kappa}^{\ \lambda \mu \nu}\, \delta K^{\kappa}_{\ \lambda \mu,\nu} + \mathcal{P}_{\kappa \ \ \ ,\nu}^{\ \lambda \mu \nu}\, \delta \Gamma^{\kappa}_{\ \lambda \mu}   - \mathcal{P}_{\kappa}^{\ \lambda \mu \nu}\, \delta  \left(  \Gamma^{\sigma}_{\ \lambda \mu}\, \Gamma^{\kappa}_{\ \nu \sigma} \right)  \, .
\end{eqnarray*}
But, we have:
\begin{lemma}
    For a tensor density $\mathcal{P}_{\kappa}^{\ \lambda \mu \nu}$, the following identity holds:
\begin{eqnarray*}
    \mathcal{P}_{\kappa \ \ \ ,\nu}^{\ \lambda \mu \nu}\, \delta \Gamma^{\kappa}_{\ \lambda \mu} - \mathcal{P}_{\kappa}^{\ \lambda \mu \nu}\, \delta \left(  \Gamma^{\sigma}_{\ \lambda \mu}\, \Gamma^{\kappa}_{\ \nu \sigma} \right) = \nabla_{\nu}\, \mathcal{P}_{\kappa}^{\ \lambda \mu \nu}\, \delta \Gamma^{\kappa}_{\ \lambda \mu}\, .
\end{eqnarray*}
\end{lemma}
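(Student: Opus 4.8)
The plan is to expand both sides into partial-derivative and pure-connection pieces and match them term by term, invoking at the very end the algebraic symmetries \eqref{idenP} that $\mathcal{P}$ inherits from being conjugate to the curvature tensor $K$.

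First I would write out the covariant divergence of the weight-one density $\mathcal{P}_{\kappa}^{\ \lambda\mu\nu}$ according to the standard rule:
\[
\nabla_{\nu}\mathcal{P}_{\kappa}^{\ \lambda\mu\nu} = \partial_{\nu}\mathcal{P}_{\kappa}^{\ \lambda\mu\nu} + \Gamma^{\lambda}_{\ \nu\alpha}\mathcal{P}_{\kappa}^{\ \alpha\mu\nu} + \Gamma^{\mu}_{\ \nu\alpha}\mathcal{P}_{\kappa}^{\ \lambda\alpha\nu} + \Gamma^{\nu}_{\ \nu\alpha}\mathcal{P}_{\kappa}^{\ \lambda\mu\alpha} - \Gamma^{\alpha}_{\ \nu\kappa}\mathcal{P}_{\alpha}^{\ \lambda\mu\nu} - \Gamma^{\alpha}_{\ \nu\alpha}\mathcal{P}_{\kappa}^{\ \lambda\mu\nu}.
\]
The essential observation is that the density-weight term $-\Gamma^{\alpha}_{\ \nu\alpha}\mathcal{P}_{\kappa}^{\ \lambda\mu\nu}$ and the term carried by the contracted upper index, $\Gamma^{\nu}_{\ \nu\alpha}\mathcal{P}_{\kappa}^{\ \lambda\mu\alpha}$, cancel identically once the summation indices are relabelled and the symmetry of $\Gamma$ in its lower pair is used. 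Hence only three genuine connection terms survive alongside $\partial_{\nu}\mathcal{P}$.

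Next I would contract this with $\delta\Gamma^{\kappa}_{\ \lambda\mu}$ and, on the left-hand side, expand $-\mathcal{P}_{\kappa}^{\ \lambda\mu\nu}\,\delta(\Gamma^{\sigma}_{\ \lambda\mu}\Gamma^{\kappa}_{\ \nu\sigma})$ by Leibniz into $A := -\mathcal{P}_{\kappa}^{\ \lambda\mu\nu}\Gamma^{\kappa}_{\ \nu\sigma}\,\delta\Gamma^{\sigma}_{\ \lambda\mu}$ and $B := -\mathcal{P}_{\kappa}^{\ \lambda\mu\nu}\Gamma^{\sigma}_{\ \lambda\mu}\,\delta\Gamma^{\kappa}_{\ \nu\sigma}$. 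The two $\partial_{\nu}\mathcal{P}$ contributions coincide at once; and after renaming dummies, $A$ reproduces exactly the term $-\Gamma^{\alpha}_{\ \nu\kappa}\mathcal{P}_{\alpha}^{\ \lambda\mu\nu}\,\delta\Gamma^{\kappa}_{\ \lambda\mu}$, again using that $\Gamma$ is symmetric in its lower indices. What then remains is to prove that $B$ equals the two surviving terms $(\Gamma^{\lambda}_{\ \nu\alpha}\mathcal{P}_{\kappa}^{\ \alpha\mu\nu} + \Gamma^{\mu}_{\ \nu\alpha}\mathcal{P}_{\kappa}^{\ \lambda\alpha\nu})\,\delta\Gamma^{\kappa}_{\ \lambda\mu}$.

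This last matching is the heart --- and the main obstacle --- of the argument, and it is the only step where the special structure of $\mathcal{P}$ enters. After relabelling, both lower indices of $\Gamma$ in $B$ sit on the first two upper slots of $\mathcal{P}$, whereas in the target terms one of them sits on the third slot, so a naive term-by-term comparison fails. To bridge this gap I would first symmetrise $B$ in the lower index pair of $\delta\Gamma^{\kappa}_{\ \nu\sigma}$ (legitimate because $\delta\Gamma$ is symmetric there), then apply the cyclic identity $\mathcal{P}_{\kappa}^{\ (\lambda\mu\nu)}=0$ from \eqref{idenP} to rotate the ``odd'' upper index of $\mathcal{P}$ out of the third slot, and finally use the first-pair symmetry $\mathcal{P}_{\kappa}^{\ \lambda\mu\nu}=\mathcal{P}_{\kappa}^{\ \mu\lambda\nu}$ to land precisely on the two target terms. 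I expect this combinatorial rearrangement driven by the cyclic identity to be the delicate point; once it is carried out, all contributions cancel in pairs and the stated identity follows.
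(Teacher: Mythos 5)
Your proposal is correct and follows essentially the same route as the paper's proof: Leibniz expansion of $\delta\left(\Gamma^{\sigma}_{\ \lambda\mu}\,\Gamma^{\kappa}_{\ \nu\sigma}\right)$, relabelling of dummies, and then the algebraic identities $\mathcal{P}_{\kappa}^{\ \lambda\mu\nu}=\mathcal{P}_{\kappa}^{\ \mu\lambda\nu}$ and $\mathcal{P}_{\kappa}^{\ (\lambda\mu\nu)}=0$ to rotate the distinguished upper index of $\mathcal{P}$ out of the last slot, which is exactly the paper's key step. The ``delicate'' matching you flag does go through as you describe (both sides reduce, under contraction with the symmetric $\delta\Gamma^{\kappa}_{\ \lambda\mu}$, to $2\,\mathcal{P}_{\kappa}^{\ \lambda\sigma\nu}\,\Gamma^{\mu}_{\ \sigma\nu}\,\delta\Gamma^{\kappa}_{\ \lambda\mu}$), so the plan is sound.
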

The proof is entirely computational:
\begin{eqnarray*}
    \mathcal{P}_{\kappa \ \ \ ,\nu}^{\ \lambda \mu \nu}\, \delta \Gamma^{\kappa}_{\ \lambda \mu} - \mathcal{P}_{\kappa}^{\ \lambda \mu \nu}\, \delta \left(  \Gamma^{\sigma}_{\ \lambda \mu}\, \Gamma^{\kappa}_{\ \nu \sigma}   \right) &=&   \mathcal{P}_{\kappa \ \ \ ,\nu}^{\ \lambda \mu \nu}\, \delta \Gamma^{\kappa}_{\ \lambda \mu} - \mathcal{P}_{\kappa}^{\ \lambda \mu \nu}\,      \Gamma^{\sigma}_{\ \lambda \mu}\, \delta  \Gamma^{\kappa}_{\ \nu \sigma}   - \mathcal{P}_{\kappa}^{\ \lambda \mu \nu}\,\Gamma^{\kappa}_{\ \nu \sigma}     \, \delta   \Gamma^{\sigma}_{\ \lambda \mu}   = \\
    &=& \left(\mathcal{P}_{\kappa \ \ \ ,\nu}^{\ \lambda \mu \nu} - \mathcal{P}_{\sigma}^{\ \lambda \mu\nu}\, \Gamma^{\sigma}_{\ \kappa \nu} - \mathcal{P}_{\kappa}^{\ \sigma\nu \lambda}\, \Gamma^{\mu}_{ \ \sigma \nu} \right)\,\delta \Gamma^{\kappa}_{\ \lambda \mu} \, .
\end{eqnarray*}
Using again \eqref{idenP}, the last term can be rewritten as follows:
\begin{eqnarray*}
 - \mathcal{P}_{\kappa}^{\ \sigma\nu \lambda}\, \Gamma^{\mu}_{ \ \sigma \nu} \,\delta \Gamma^{\kappa}_{\ \lambda \mu} &=&  \mathcal{P}_{\kappa}^{\ \sigma\nu \lambda}\, \Gamma^{\mu}_{ \ \nu\lambda} \,\delta \Gamma^{\kappa}_{\ \sigma \mu}
 + \mathcal{P}_{\kappa}^{\ \sigma\nu \lambda}\, \Gamma^{\mu}_{ \ \lambda\sigma } \,\delta \Gamma^{\kappa}_{\ \nu\mu}=
 \\
   &=&
 \mathcal{P}_{\kappa}^{\ \lambda\nu \sigma}\, \Gamma^{\mu}_{ \ \nu\sigma} \,\delta \Gamma^{\kappa}_{\ \lambda \mu}
 + \mathcal{P}_{\kappa}^{\ \sigma\lambda\nu }\, \Gamma^{\mu}_{ \ \nu\sigma } \,\delta \Gamma^{\kappa}_{\ \lambda\mu}= \\
 &=& \mathcal{P}_{\kappa}^{\ \lambda\sigma\nu }\, \Gamma^{\mu}_{ \ \sigma\nu} \,\delta \Gamma^{\kappa}_{\ \lambda \mu}
 + \mathcal{P}_{\kappa}^{\ \nu\lambda\sigma }\, \Gamma^{\mu}_{ \ \sigma\nu } \,\delta \Gamma^{\kappa}_{\ \lambda\mu}
 \,  ,
\end{eqnarray*}
and, whence:
\begin{eqnarray*}
  \mathcal{P}_{\kappa \ \ \ ,\nu}^{\ \lambda \mu \nu}\, \delta \Gamma^{\kappa}_{\ \lambda \mu} - \mathcal{P}_{\kappa}^{\ \lambda \mu \nu}\, \delta \left(  \Gamma^{\sigma}_{\ \lambda \mu}\, \Gamma^{\kappa}_{\ \nu \sigma}   \right)  =  \left(\mathcal{P}_{\kappa \ \ \ ,\nu}^{\ \lambda \mu \nu} - \mathcal{P}_{\sigma}^{\ \lambda \mu\nu}\, \Gamma^{\sigma}_{\ \kappa \nu}+  \mathcal{P}_{\kappa}^{\   \lambda \sigma\nu}\, \Gamma^{\mu}_{ \ \sigma \nu} + \mathcal{P}_{\kappa}^{\ \nu \lambda \sigma}\, \Gamma^{\mu}_{ \ \sigma \nu} \right)\,\delta \Gamma^{\kappa}_{\ \lambda \mu} \, .
\end{eqnarray*}
On the other hand, we have:
\begin{eqnarray*}
    \left(\nabla_{\nu}\, \mathcal{P}_{\kappa}^{\ \lambda \mu \nu}\right)\, \delta \Gamma^{\kappa}_{\ \lambda \mu} &=& \left(\mathcal{P}_{\kappa \ \ \ ,\nu}^{\ \lambda \mu \nu} - \mathcal{P}_{\sigma}^{\ \lambda \mu\nu}\, \Gamma^{\sigma}_{\ \kappa \nu}  + \mathcal{P}_{\kappa}^{\   \sigma \mu \nu}\, \Gamma^{\mu}_{ \ \sigma \nu} +\mathcal{P}_{\kappa}^{\  \lambda \sigma \nu}\, \Gamma^{\mu}_{ \ \sigma \nu} \right)\,\delta \Gamma^{\kappa}_{\ \lambda \mu}\, ,
\end{eqnarray*}
which ends the proof of the Lemma.

\

Inserting formula:
\begin{eqnarray}
        \partial_{\nu}\left(\mathcal{P}_{\kappa}^{\ \lambda \mu \nu}\, \delta \Gamma^{\kappa}_{\ \lambda \mu}\right)  = \mathcal{P}_{\kappa}^{\ \lambda \mu \nu}\, \delta K^{\kappa}_{\ \lambda \mu \nu} + \nabla_{\nu}\,\mathcal{P}_{\kappa}^{\ \lambda \mu \nu}\, \delta \Gamma^{\kappa}_{\ \lambda \mu}  \, , \nonumber \\
                \label{var: lagranzjan afiniczny z krzywizna kijowskiego}
\end{eqnarray}
into \eqref{krzywizna kijowskiego 2} we obtain:
\begin{eqnarray*}
     \partial_{\nu}\left(\mathcal{P}_{\kappa}^{\ \lambda \mu \nu}\, \delta \Gamma^{\kappa}_{\ \lambda \mu}\right)  =\pi^{\mu\nu}\, \delta K_{\mu \nu}  + \Omega_{\kappa}^{\ \lambda \mu\nu}\, \delta U^{\kappa}_{\ \lambda \mu\nu} +     \chi^{\mu \nu}\, \delta F_{\mu \nu}+\left(\nabla_{\nu}\, \mathcal{P}_{\kappa}^{\ \lambda \mu \nu}\right)\, \delta \Gamma^{\kappa}_{\ \lambda \mu}\, ,
\end{eqnarray*}
where we have defined:
\begin{eqnarray}
\pi^{\mu \nu} &=&-\frac 23 \mathcal{P}_{\kappa}^{\  \kappa (\mu \nu)} \, ,
\label{def: ped pi jako slad pedu calP} \\
\chi^{\mu \nu} &=&- \frac 25  \mathcal{P}_{\kappa}^{\ \kappa [\mu \nu]} \,
,\label{def: ped chi jako slad pedu calP}
\end{eqnarray}
whereas $\Omega$ represents the remaining, traceless part of ${\cal P}$. This finally gives rise to the following decomposition of the momentum ${\cal P}$, analogous to \eqref{idenR} and \eqref{idenK}:
\begin{eqnarray}
    \mathcal{P}_{\kappa}^{\ \lambda \mu \nu}= -\frac 12
\left(\delta^{\lambda}_{\kappa}\, \pi^{\mu \nu }+\delta^{\mu}_{\kappa}\,
\pi^{\lambda \nu} - 2\delta^{\nu}_{\kappa}\, \pi^{  \lambda\mu} \right)  -\frac
12 \left(\delta^{\lambda}_{\kappa}\, \chi^{\mu \nu }+\delta^{\mu}_{\kappa}\,
\chi^{\lambda \nu}  \right) + \Omega_{\kappa}^{\ \lambda \mu \nu}\, .
\label{eq: rozklad pedu P}
\end{eqnarray}

The techniques used above prove that the use of the curvature tensor $K^{\kappa}_{\ \lambda \mu \nu}$, instead of the Riemann tensor $R^{\kappa}_{\ \lambda \mu \nu}$ simplifies considerably description of the canonical structure of the theory. Indeed, the canonical momentum \eqref{DefP} is directly equal to the derivative of the Lagrangian density $\Lag$ with respect to the curvature:
\begin{eqnarray}
\label{DefP1}
     \mathcal{P}_{\kappa}^{\ \lambda\mu\nu} := \frac {\partial \Lag_A}{\partial \Gamma^{\kappa}_{\ \lambda\mu,\nu}}  = \frac {\partial \Lag_A}{\partial K^{\kappa}_{\ \lambda\mu\nu}}
     \, ,
\end{eqnarray}
whereas derivative with respect to the Riemann tensor would need further symmetrization, which obscures considerably this, relatively simple and transparent, structure.

\section{Relation between the metric and non-metric Ricci tensors}
\label{proof R}
Using definition \eqref{def: tensor Ricciego} of the Ricci tensor $R_{\mu\nu}$ of the connection $\Gamma^{\kappa}_{\ \mu \nu} = \ \mGamma^{\kappa}_{\ \mu \nu} + N^{\kappa}_{\ \mu \nu}$, we can easily decompose the complete Ricci into the metric Ricci\  $\kolo R$ and the remaining part, which depends upon non-metricity and its derivatives:
\begin{eqnarray*}
    R_{\mu \nu}&=&  -\Gamma^{\kappa}_{\ \kappa\mu, \nu} + \Gamma^{\kappa}_{\ \mu \nu, \kappa} - \Gamma^{\sigma}_{\  \kappa\mu}\, \Gamma^{\kappa}_{\ \nu \sigma} + \Gamma^{\sigma}_{\ \mu \nu}\, \Gamma^{\kappa}_{\ \kappa \sigma} = \nonumber  \\
    &=&\kolo R_{\mu\nu} -  \underline{N^{\kappa}_{\ \kappa\mu, \nu}} + \underbrace{N^{\kappa}_{\ \mu \nu, \kappa}} -\, \underbrace{\mGamma^{\sigma}_{\  \kappa\mu}\, N^{\kappa}_{\ \nu \sigma}} +  \underline{\mGamma^{\sigma}_{\ \mu \nu}\, N^{\kappa}_{\ \kappa \sigma}} - \underbrace{N^{\sigma}_{\  \kappa\mu}\, \mGamma^{\kappa}_{\ \nu \sigma}} +\underbrace{ N^{\sigma}_{\ \mu \nu}\, \mGamma^{\kappa}_{\ \kappa \sigma}} + \\
    &&- N^{\sigma}_{\  \kappa\mu}\, N^{\kappa}_{\ \nu \sigma} + N^{\sigma}_{\ \mu \nu}\, N^{\kappa}_{\ \kappa \sigma} \, .
\end{eqnarray*}
The terms, which have been marked, gather to the covariant derivatives of the non-metricity $N$:
\begin{eqnarray}
    R_{\mu \nu}&=&\kolo R_{\mu\nu} -  \underline{\mnabla_{\nu} N^{\kappa}_{\ \kappa\mu }} + \underbrace{\mnabla_{\kappa} N^{\kappa}_{\ \mu \nu }}  - N^{\sigma}_{\  \kappa\mu}\, N^{\kappa}_{\ \nu \sigma} + N^{\sigma}_{\ \mu \nu}\, N^{\kappa}_{\ \kappa \sigma} \, .
    \label{rozklad ricci}
\end{eqnarray}

\section{Non-metricity tensor $N$}
\label{calP N}
To find the non-metricity tensor $N$ we have to solve equation \eqref{N=lincalP}:
\begin{eqnarray*}
    {\cal P}_{\ \ \lambda}^{\mu\nu} &=& \pi^{\mu\alpha} N^{\nu}_{ \ \lambda\alpha} +
    \pi^{\nu\alpha} N^{\mu}_{ \ \lambda\alpha} - \pi^{\mu\nu} N^{\alpha}_{ \ \lambda\alpha} - \frac 12
    \left( \delta^\mu_\lambda\, N^{\nu}_{ \ \alpha\beta} + \delta^\nu_\lambda \, N^{\mu}_{ \ \alpha\beta}
    \right) \pi^{\alpha\beta} \, .
\end{eqnarray*}
First, we calculate the trace putting $\nu=\lambda $:
\begin{eqnarray*}
    \mathcal{P}^{\mu\lambda}_{\ \ \lambda} = -\frac 32\, \frac{\sqrt{|\det g|}}{  16\pi} \,N^{\mu \alpha}_{ \ \ \alpha }  \, ,
\end{eqnarray*}
and contracting the equation \eqref{N=lincalP} with metric tensor $g_{\mu\nu}$:
\begin{equation*}
      {\cal P}_{\ \sigma \lambda}^{\sigma} = \frac{\sqrt{|\det g|}}{16\pi}\left[ -2 N^{\alpha}_{ \ \lambda\alpha}   -   N^{ \ \ \alpha}_{ \lambda \alpha}   \right]\, .
\end{equation*}
Then, a simple algebra leads to the final result:
\begin{eqnarray*}
 N^{\kappa}_{\ \lambda\mu} = \frac{8\pi}{\sqrt{|\det g|}}\, g^{\kappa \sigma} \, \left[ \mathcal{P}_{\sigma \lambda \mu} + \mathcal{P}_{\sigma \mu \lambda} - \mathcal{P}_{ \lambda \mu \sigma}
 + g_{\lambda \mu}\, \left(\mathcal{P}_{\sigma \nu}^{\ \ \nu} - \frac 12\, \mathcal{P}^{\nu}_{\ \nu \sigma}  \right)  +\mathcal{P}^{\nu}_{\ \nu (\lambda}\, g_{\mu)\sigma} - \frac 23\,  g_{\sigma(\lambda}\, \mathcal{P}_{\mu) \nu}^{\ \ \ \nu}
\right] \, .
\end{eqnarray*}

\section{Example of transition from metric to Palatini picture }
\label{example}
Take the following matter Lagrangian density for the vector field $\phi^{\mu}$:
\begin{eqnarray*}
 \Lag_{matt}(g,\, \mnabla \phi) = \frac{\sqrt{|\det g|} }{16 \pi} \,\left( \mnabla_{\alpha}\phi^{\beta}\right)\, \left( \mnabla^{\alpha}\phi_{\beta}\right)\, .
\end{eqnarray*}
To find the corresponding Palatini Lagrangian density \eqref{LP-1}, we have to calculate the non-metricity tensor $N$ as a function of $g, \phi, \nabla\phi$ from equation \eqref{N=lincalP}. The idea of solving is quite easy (see Appendix \ref{calP N}), but, unfortunately, calculations are difficult and cumbersome. Using the symbolic calculation package\footnote{We used the  suite of free packages \textit{xAct} version 1.2.0 (especially\textit{ xTensor} package) prepared by Jos\'e M. Mart\'in-Garc\'ia (see www.xAct.es) for \textit{Mathematica} version 12.2 environment.}, we obtain the final result. In terms of the following auxiliary quantities:
 \begin{eqnarray*}
   \phi^2 &=&\phi^{\alpha}\, \phi_{\alpha}\, , \\
    D_1 &=& 3+2\, \phi^2\, , \\
     D_2&=& -\phi^4-\phi^2+3\, , \\
   B_{\alpha}^{\ \beta} &=& \nabla_{\alpha}\phi^{\beta}\, , \\
    E_{\mu}&=& B_{\mu}^{\ \alpha}\,\phi_{\alpha}\, , \\
    F^{\mu}&=&\phi^{\alpha}\, B_{\alpha}^{\ \mu}\, , \\
    A&=& \phi^{\alpha}\, B_{\alpha}^{\ \beta}\, \phi_{\beta}=E_{\mu}\,\phi^{\mu}=\phi_{\mu}\, F^{\mu}\, , \\
    B &=& B_{\alpha}^{\ \alpha}\,  ,
   \end{eqnarray*}
the Palatini Lagrangian is equals:
\begin{eqnarray*}
  \Lag_P(g,\, \nabla \phi) = \frac{\sqrt{|\det g|} }{32\pi\, D_1\, D_2\, (\phi^2-1)} && \bigg[ D_1\,  D_2\,\left(\phi^2-2\right) \, B_{\kappa\lambda} \, B^{\kappa\lambda}    + D_1\,  D_2\,   \phi^2  \, B_{\kappa\lambda}\, B^{\lambda\kappa} + \\
 &&+4 A^2 \left(\phi^2-1\right) \phi^2 +4 D_2\,  \left(\phi^2-1\right)\, A\, B +  \\
 && -  \phi^2\,  D_2\,  \left(\phi^2-1\right)\, B^2  - D_1\, \phi^4\,  F_{\mu}\, F^{\mu} +  \\
 &&- 2 D_1\,      \left(\phi^2-2\right) \, \phi^2\,   E_{\mu}\, F^{\mu}  -D_1\,  \left(\phi^2-2\right)^2 \, E_{\mu}\, E^{\mu}  \bigg] \, .
\end{eqnarray*}

\section{Transition from Palatini to metric picture -- example}
\label{example2}
Take as the Palatini Lagrangian for the vector field $X^{\mu}$ the quadratic expression:
\begin{eqnarray*}
 \Lag_P(g,\, \nabla X)=\frac{\sqrt{|\det g|} }{16\pi} \,\left( \nabla_{\alpha} X^{\mu}\right)\, \left( \nabla_{\beta}X^{\nu}\right)\, g^{\alpha\beta}\, g_{\mu\nu} \, .
\end{eqnarray*}
In a way analogous to the one used in Appendix \ref{example}, using the symbolic computations provided by Mathematica, we obtain the final result in terms of the auxiliary quantities:
\begin{eqnarray*}
   X^2 &=&X^{\alpha}\,X_{\alpha}\, , \\
  \widetilde{D}_1&=&2X^2-3\, , \\
   \widetilde{D}_2&=&X^4-X^2-3\, , \\
   \widetilde{B}_{\mu}^{\ \alpha} &=& \mnabla_{\mu}X^{\alpha}\, , \\
   \widetilde{E}_{\mu} &=&  \widetilde{B}_{\mu}^{\ \alpha}\, X_{\alpha}\, , \\
   \widetilde{F}^{\alpha}&=&  X^{\mu}\, \widetilde{B}_{\mu}^{\ \alpha}\, , \\   \widetilde{A}&=&X_{\alpha}\,\widetilde{F}^{\alpha}=\widetilde{E}_{\alpha}\,X^{\alpha} \, , \\
   \widetilde{B}&=&\widetilde{B}_{\alpha}^{\ \alpha} \, .
   \end{eqnarray*}
The corresponding metric matter Lagrangian defined by equation \eqref{lagp3} has the following form:
     \begin{eqnarray*}
 \Lag_{matt}(g,\, \mnabla X) = \frac{\sqrt{|\det g|} }{32\pi\, \widetilde{D}_1\, \widetilde{D}_2\, (X^2+1)}  && \bigg[ \widetilde{D}_1 \,  \widetilde{D}_2 \,\left(X^2+2\right) \, \widetilde{B}_{\kappa\lambda}\, \widetilde{B}^{\kappa\lambda}  +  \widetilde{D}_1 \,  \widetilde{D}_2 \,X^2\,  \widetilde{B}_{\lambda\kappa}  \, \widetilde{B}^{\kappa\lambda} +   \\
 &&+ 4 \left(X^2+1\right)\, X^2\, \widetilde{A}^2 + 4 \widetilde{D}_2\,  \left(X^2+1\right)\,  \widetilde{A}\, \widetilde{B}  +   \\
 &&-  X^2\,    \widetilde{D}_2 \,  \left(X^2+1\right)\,  \widetilde{B}^2  - \widetilde{D}_1\, \left(X^2-2\right)^2\,    \widetilde{E}_{\mu}\, \widetilde{E}^{\mu}  +   \\
 &&- 2  \widetilde{D}_1 \, \left(X^2+2\right)\,  X^2\,  \widetilde{E}_{\mu}\, \widetilde{F}^{\mu}  - \widetilde{D}_1   \,X^4\, \widetilde{F}_{\mu}\, \widetilde{F}^{\mu}\bigg]
\, .
\end{eqnarray*}

\end{document}